\documentclass[smallextended]{svjour3}       

\smartqed  

\usepackage{amssymb,amsmath}
\usepackage{graphicx}
\usepackage{booktabs}
\usepackage{hyperref}
\usepackage{enumerate, multirow}

\begin{document}

\title{Partially user-irrepressible sequence sets and conflict-avoiding codes}


\author{Yuan-Hsun Lo \and Wing Shing Wong \and Hung-Lin Fu}


\institute{Y.-H. Lo \at
              Department of Mathematics, National Taiwan Normal University, Taipei 116, Taiwan \\
              \email{yhlo0830@gmail.com}
           \and
           W. S. Wong \at
           	  Department of Information Engineering, the Chinese University of Hong Kong, Hong Kong\\
           	  \email{wswong@ie.cuhk.edu.hk}
           \and
           H.-L. Fu \at
              Department of Applied Mathematics, National Chiao Tung University, Hsinchu 300, Taiwan\\
              \email{hlfu@math.nctu.edu.tw}
           \and
              The work partially supported by Research Grants Council of the Hong Kong Special Administrative Region under project 414012 (Y.-H. Lo and W. S. Wong), and the National Science Council under grants 100-2115-M-009-005-MY3 (H.-L. Fu).
}

\date{Received: date / Accepted: date}

\maketitle

\begin{abstract}
In this paper we give a partial shift version of user-irrepressible sequence sets and conflict-avoiding codes.
By means of disjoint difference sets, we obtain an infinite number of such user-irrepressible sequence sets whose lengths are shorter than known results in general.
Subsequently, the newly defined partially conflict-avoiding codes are discussed.
\keywords{user-irrepressible protocol sequence \and conflict-avoiding code \and disjoint difference set}
\subclass{94B25 \and 94C15 \and 05B10}
\end{abstract}

\section{Introduction}\label{sec:Intro}
Protocol sequences, which were first introduced in~\cite{Massey}, provide feedback-free solutions for Media Access Control (MAC) in communication networks.
While the dominant MAC standards for cell-based systems, including cellular networks and Wireless LAN's, are feedback-based, the feedback-free approach has a strong appeal to networks without a backbone hierarchy. 
For example, recent works have begun to explore the application of protocol sequences to ad hoc networks, such as \emph{vehicular ad hoc network} (VANET)~\cite{Wong_2014,Wu_Shum_Wong_Shen_2014}.

A fundamental challenge in MAC design is due to the lack of synchronicity among different users who try to access the shared medium.
Protocol sequences are constructed specifically to handle the asynchronous reality.
Intuitively, a good design should ensure that no matter how the sequences are shifted with respect to one another, each sequence should permit its affiliated user to transmit at least one packet without suffering interference from other users.
Protocol sequence sets with this property are commonly referred to as possessing the user-irrepressible (UI) property~\cite{Shum_Wong_2009,Wong_2007}.
It turns out that an important approach to construct UI protocol sequence sets is by means of CAC, which stands for Conflict-avoiding Codes \cite{Fu_Lo_Shum_2014,Momihara_2007,Shum_Wong_Chen_2010}.
Therefore, there is a close tie between protocol sequences and CAC.
The objective of finding UI protocol sequence sets with large number of sequence elements with short sequence period can be transformed to finding CAC sets with large code size and short code length.

Although it is difficult to ensure precise user-synchronicity in multi-user communication systems, in many applications it is relatively easy to maintain some rough degree of user synchronicity.
For example, mobile users may have access to a global clock via the GPS, which provides rough time synchronization.
However, due to propagation delays and other engineering restrictions, transmitted signals cannot be completely synchronized (see for example \cite{Wong_2014}).
For partially synchronous applications, protocol sequence sets are only required to observe the UI property for relative shifts up to a certain magnitude.

In this paper, we define a partial shift version of user-irrepressible sequence sets in Section~\ref{sec:UI}.
Two prior known constructions: TDMA and code-based scheduling (via Galois field or Reed-Solomon code), are then
introduced to provide some quick baseline comparison.
Next, we introduce a new concept, called partially conflict-avoiding code (PCAC), in order to build a partially user-irrepressible sequence set.
The definition of a partially conflict-avoiding code will be given in Section~\ref{sec:combin} together with its graphic representation.
A useful tool in combinatorial design called disjoint difference set is also introduced.
In Section~\ref{sec:MainResult} we provide a few families of partially user-irrepressible sequence sets by means of disjoint difference sets.
Comparison of the PCAC approach with TDMA and code-based scheduling will also be given in Section~\ref{sec:MainResult}.
Finally, we study the optimal partially conflict-avoiding codes of small weights in Section~\ref{sec:PCAC23}.

\section{User-Irrepressible sequences}\label{sec:UI}

Let $n$ be a positive integer and $X$ be a binary sequence of length $n$.
The \emph{cyclic shift operator}, $\mathcal{R}$, on $X$ is defined by
$$\mathcal{R}(X(0),X(1),\ldots,X(n-1)):=(X(n-1),X(0),\ldots,X(n-2)),$$
where $X(i)$ denotes the $i$-th component of $X$.
The following definition is an extension of \emph{user-irrepressible} property which is proposed in \cite{Shum_Wong_Chen_2010}.

\begin{definition}
Let $n,k,\Delta$ be integers satisfying $0<k\leq n$ and $0\leq\Delta<n$.
Consider a sequence set with $N$ ($\geq k$) elements, each having a length $n$.
Each element is represented by a shifted version that is obtained by applying the operator $\mathcal{R}$ for an arbitrary number (say $\tau$) of times, where $0\leq\tau\leq\Delta$.
Denote by $\mathbf{M}$ the $k\times n$ matrix obtained by stacking any $k$ representations one above the other.
The sequence set is \emph{$(n,k;\Delta)$-User-Irrepressible} (UI for short) if we can always find a $k\times k$ submatrix of $\mathbf{M}$ which is a permutation matrix.
\end{definition}

An $(n,k;\Delta)$-UI sequence set is obviously a solution to the problem we formulated in Section~\ref{sec:Intro}.
Throughout this paper, we use $N$, $k$, and $n$ to denote respectively the number of potential users in a system, the maximum number of active users at any time, and the common sequence period.

It is not hard to find an $(n,k;\Delta)$-UI sequence set.
One simple way is based on the TDMA approach.
For $0\leq i \leq k-1$, let $X_i$ be the binary sequence of length $k(\Delta+1)$ composed of all zeroes except for the $i(\Delta+1)$-th position, that is, $X_i(i(\Delta+1))=1$.
Then $\{X_0,X_1,\ldots,X_{k-1}\}$ is obviously an $(n,k;\Delta)$-UI sequence set of length $n=k(\Delta+1)$ and size $N=k$.
In practice, however, the set size $N$ is in theory larger than $k$.
An alternative construction for the case where $N$ is much larger than $k$ is based on Galois fields.
After appending $\Delta$ `zeroes' to all entries of each sequence constructed in \cite{Chlamtac_1994}, we have the following result.

\begin{theorem}(\cite{Chlamtac_1994}, \cite{Wong_2014}) \label{thm:Galois_field}
Given a prime power $q$ and a positive integer $m$.
Then for any $\Delta\geq 0$, there exists a $((\Delta+1)q^2,k;\Delta)$-UI sequence set of size $N=q^m$, where the positive integer $k$ satisfies
\begin{equation}\label{eq:GF_condition}
q\geq (k-1)(m-1)+1.
\end{equation}
In general, it provides an $(n,k;\Delta)$-UI sequence set of size $N$ with length
$$n = O\left( \Delta k^2m^2 \right) = O\left( \frac{\Delta k^2\ln^2N}{\ln^2k} \right).$$
\end{theorem}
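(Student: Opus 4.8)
The plan is to realize the construction explicitly through polynomials over $\mathrm{GF}(q)$ and then verify the UI property by a collision-counting argument. I would index the $q^2$ base slots by pairs $(x,y)$ with $x,y\in\mathrm{GF}(q)$, and assign to each polynomial $f$ of degree at most $m-1$ the sequence that places a $1$ in the slot $(x,f(x))$ for every $x\in\mathrm{GF}(q)$. Since there are exactly $q^m$ such polynomials, this yields $N=q^m$ sequences, each of Hamming weight $q$. To incorporate the shift budget $\Delta$, I would replace every base slot by a block of $\Delta+1$ positions whose first entry carries the original bit and whose remaining $\Delta$ entries are zero, producing sequences of length $(\Delta+1)q^2$; equivalently, the $1$'s of the padded sequence sit at positions $(\Delta+1)p$, where $p$ ranges over the base positions carrying a $1$.

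The heart of the argument is to control collisions after arbitrary shifts $\tau_i\in\{0,1,\ldots,\Delta\}$. First I would observe that after shifting by $\tau_i$, all $1$'s of user $i$ lie in positions congruent to $\tau_i$ modulo $\Delta+1$. Hence if two chosen users carry distinct shifts $\tau_i\neq\tau_j$ (both in $[0,\Delta]$), their $1$'s occupy disjoint residue classes modulo $\Delta+1$ and the two sequences never collide; the zero-padding thus acts as a guard interval. If instead $\tau_i=\tau_j$, a collision forces the two users to share a base position, that is, a point $x$ with $f_i(x)=f_j(x)$. Since $f_i-f_j$ is a nonzero polynomial of degree at most $m-1$, the number of such agreements is at most $m-1$.

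With these two facts, any selected user $i$ collides with each of the remaining $k-1$ users in at most $m-1$ slots, hence in at most $(k-1)(m-1)$ slots in total. Because user $i$ has $q$ ones, the hypothesis $q\geq(k-1)(m-1)+1$ guarantees at least one slot in which user $i$ transmits while all other selected users are silent. I would then note that such a collision-free slot gives a column of $\mathbf{M}$ containing a single $1$ among the $k$ rows, so the collision-free columns chosen for distinct users are automatically distinct; restricting $\mathbf{M}$ to these $k$ columns produces the desired $k\times k$ permutation submatrix, establishing the $((\Delta+1)q^2,k;\Delta)$-UI property.

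For the asymptotic length estimate, I would take $q$ at its smallest admissible value $q\approx(k-1)(m-1)$, so that $n=(\Delta+1)q^2=O(\Delta k^2m^2)$; substituting $m=\ln N/\ln q$ and $\ln q=\Theta(\ln k)$ then yields $n=O(\Delta k^2\ln^2 N/\ln^2 k)$. I expect the main obstacle to be the shift bookkeeping modulo $(\Delta+1)q^2$: one must verify rigorously that two distinct shifts in $[0,\Delta]$ cannot be reconciled by any multiple of $\Delta+1$ within a single period, which is precisely what decouples the collision analysis into the two clean cases above. Once that decoupling is secured, the polynomial-agreement bound and the permutation-matrix conclusion follow routinely.
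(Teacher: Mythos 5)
Your proposal is correct and takes essentially the same route as the paper: the paper obtains this theorem precisely by taking the polynomial-over-$\mathrm{GF}(q)$ construction of Chlamtac--Farag\'{o} and appending $\Delta$ zeros after each entry, which is exactly your guard-interval padding. Your verification (disjoint residue classes modulo $\Delta+1$ when the shifts differ, and at most $m-1$ agreements between distinct polynomials of degree at most $m-1$ when the shifts coincide, so that $q\geq(k-1)(m-1)+1$ leaves each user a collision-free slot yielding the permutation submatrix) is the standard argument that the paper delegates to its citations rather than spelling out.
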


Note that the parameter $m$ above must be larger than $1$ to make \eqref{eq:GF_condition} meaningful.
It is worth mentioning that in \cite{Rentel_Kunz_2005}, a solution based on Reed-Solomon Codes was proposed which has the same order behavior.

\section{Combinatorial structure}\label{sec:combin}

In this section, we define the new concept of partially conflict-avoiding codes and introduce
two relevant combinatorial structures for analyzing them: graph packings and disjoint difference sets.
The connection of these terms with UI sequence sets will be shown as

\begin{align}
\begin{array}{rcc}
(n,k;\Delta)\text{-UI sequence set} & {\Longleftarrow} \atop \text{Prop.~\ref{pro:UI_PCAC}} & \text{PCAC}_{\Delta}(n,k) \\
 & & \hspace{0.8cm}\Updownarrow \substack{\text{Prop.~\ref{pro:PCAC_packing}}} \\
(n,k,r)\text{-DDS} & {\Longrightarrow} \atop \text{Prop.~\ref{pro:DDS_packing}} & (k,\Delta)\text{-packing} \text{ of }K_n
\end{array}
\end{align}

\medskip

\subsection{CAC and $\text{PCAC}_\Delta$}

Given a binary sequence $X$, the \emph{weight} of $X$, denoted by $\omega(X)$, is the number of `ones' in it.
For integers $n>k>0$, let $\mathcal{S}(n,k)$ denote the set of all binary sequences of length $n$ and weight $k$.
The \emph{Hamming cross-correlation} of binary sequences $X$ and $Y$ is defined by
\begin{equation} \label{eq:full-crosscorel}
H(X,Y):= \max_{\tau}\sum_{i=0}^{n-1}X(i)\mathcal{R}^\tau Y(i),
\end{equation}
where $\tau$ goes from $0$ up to $n-1$.
Note that $H(X,X)=\omega(X)$ for all $X$ and $H(X,Y)\geq 1$ if $X\neq Y$.

\begin{definition} \label{defi:CAC}
A set $\mathcal{C}\subseteq\mathcal{S}(n,k)$ is a \emph{conflict-avoiding code}, CAC, of length $n$ and weight $k$ if $H(X,Y)=1$ for any distinct $X,Y\in\mathcal{C}$.
\end{definition}

Denote by CAC$(n,k)$ the class of all CACs of length $n$ and weight $k$.
The maximum size of codes in CAC$(n,k)$ is denoted by $M(n,k)$.
A code $\mathcal{C}\in \mbox{CAC}(n,k)$ is said to be \emph{optimal} if $|\mathcal{C}|=M(n,k)$.
For more results on optimal CACs, please refer to \cite{Fu_Lin_Mishima_2010,Fu_Lo_Shum_2014,Levenshtein_Tonchev_2005,Momihara_2007,Shum_Wong_Chen_2010}.

In what follows, we generalize the constraint that $\tau$ is arbitrary in \eqref{eq:full-crosscorel}.
Assume that $\Delta$, an integer between 0 and $n-1$, is the maximum number of relative cyclic shifts.
Then the \emph{Hamming cross-correlation of $X,Y\in\mathcal{S}(n,k)$ with respect to $\Delta$} is defined by
\begin{equation}
H_{\Delta}(X,Y):= \max_{0\leq\tau \leq\Delta}\sum_{i=0}^{n-1}X(i)\mathcal{R}^\tau Y(i).
\end{equation}

\begin{definition} \label{defi:partial-CAC}
Let $n,k,\Delta$ be integers with $0<k<n$ and $0\leq\Delta<n$.
A set $\mathcal{C}\subseteq\mathcal{S}(n,k)$ is a \emph{partially conflict-avoiding code with respect to $\Delta$}, $\mbox{PCAC}_{\Delta}$, of length $n$ and weight $k$ if $H_{\Delta}(X,Y)\leq 1$ for any distinct $X,Y\in \mathcal{C}$.
\end{definition}

Similarly, $\text{PCAC}_{\Delta}(n,k)$ denotes the class of all $\mbox{PCAC}_{\Delta}$s of length $n$ and weight $k$, and $M_{\Delta}(n,k)$ denotes the maximum size of codes in $\mbox{PCAC}_{\Delta}(n,k)$.
It is obvious that a $\text{PCAC}_{\Delta}$ admits the UI-property.

\begin{proposition}\label{pro:UI_PCAC}
A code $\mathcal{C}\in\text{PCAC}_{\Delta}(n,k)$ is an $(n,k;\Delta)$-UI sequence set with size $N=|\mathcal{C}|$.
\end{proposition}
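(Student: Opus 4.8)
The plan is to unwind both definitions and reduce the existence of a $k\times k$ permutation submatrix to a short counting argument driven by the $\mbox{PCAC}_{\Delta}$ constraint. Fix any $k$ distinct codewords $X_1,\dots,X_k\in\mathcal{C}$ together with arbitrary shifts $\tau_1,\dots,\tau_k$ satisfying $0\le\tau_j\le\Delta$, and set $Z_j=\mathcal{R}^{\tau_j}X_j$, so that the rows of $\mathbf{M}$ are exactly $Z_1,\dots,Z_k$. Since $\mathcal{R}$ merely permutes coordinates, each $Z_j$ still has weight $k$; writing $\mathrm{supp}(Z_j)$ for its set of $1$-positions, I will work throughout with these $k$-element subsets of $\{0,1,\dots,n-1\}$ and with the overlap $|\mathrm{supp}(Z_i)\cap\mathrm{supp}(Z_j)|=\sum_{\ell}Z_i(\ell)Z_j(\ell)$.

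The first key step is to bound the pairwise overlaps. The identity $\sum_\ell(\mathcal{R}^aX)(\ell)(\mathcal{R}^bY)(\ell)=\sum_\ell X(\ell)(\mathcal{R}^{\,b-a}Y)(\ell)$, which holds because applying $\mathcal{R}^{-a}$ to both arguments is a bijection on coordinates, lets me rewrite $|\mathrm{supp}(Z_i)\cap\mathrm{supp}(Z_j)|$ as a cross-correlation of $X_i$ and $X_j$ at the single relative shift $\tau_j-\tau_i$. The subtlety to watch is that this relative shift lies in $[-\Delta,\Delta]$ rather than in $[0,\Delta]$, so a direct appeal to $H_{\Delta}(X_i,X_j)\le 1$ only covers the case $\tau_j\ge\tau_i$. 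When $\tau_j<\tau_i$ I will instead regroup the sum as a correlation of $X_j$ against a \emph{nonnegative} shift of $X_i$ and invoke $H_{\Delta}(X_j,X_i)\le 1$; this is legitimate precisely because Definition~\ref{defi:partial-CAC} imposes the bound on every distinct pair in either order. Either way I obtain $|\mathrm{supp}(Z_i)\cap\mathrm{supp}(Z_j)|\le 1$ for all $i\ne j$, which is the only structural fact needed.

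The second step converts this into a column selection. For a fixed row $j$, call a position $c\in\mathrm{supp}(Z_j)$ \emph{exclusive} if no other row carries a $1$ there. Because each of the $k-1$ remaining rows meets $Z_j$ in at most one position, at most $k-1$ of the $k$ ones of $Z_j$ are non-exclusive, so $Z_j$ has at least one exclusive position $c_j$. Moreover an exclusive position of row $j$ cannot be a $1$-position of any other row, so the chosen $c_1,\dots,c_k$ are automatically distinct. Restricting $\mathbf{M}$ to all $k$ rows and the columns $c_1,\dots,c_k$ then yields a matrix whose only $1$'s sit at the positions $(j,c_j)$, i.e.\ a permutation matrix, which establishes the $(n,k;\Delta)$-UI property.

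I expect the main obstacle to be the bookkeeping in the first step: one must remember that the $\mbox{PCAC}_{\Delta}$ condition controls only nonnegative shifts, and the argument hinges on exploiting the two-sided quantification over distinct pairs to absorb negative relative shifts. An alternative to the exclusivity count in the second step would be to verify Hall's condition $|\bigcup_{j\in S}\mathrm{supp}(Z_j)|\ge|S|$ through the Bonferroni bound $|S|k-\binom{|S|}{2}\ge|S|$, which holds since $|S|\le k$; but the exclusivity argument is more direct and produces the permutation matrix outright rather than merely a system of distinct representatives.
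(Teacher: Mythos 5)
Your proof is correct. Note that the paper offers no argument at all for this proposition---it is prefaced with ``It is obvious that a $\text{PCAC}_{\Delta}$ admits the UI-property''---and your write-up supplies exactly the reasoning being taken for granted: the reduction of each pairwise row overlap to a cross-correlation at a single relative shift, followed by the counting argument that each row of weight $k$ must retain an exclusive column against the other $k-1$ rows. You also correctly handle the one genuinely non-obvious point, namely that the relative shift $\tau_j-\tau_i$ may be negative while $H_{\Delta}$ only controls shifts in $[0,\Delta]$, which you absorb by invoking the PCAC condition on the pair in the opposite order; this is precisely the detail a careful reader must check before accepting the paper's ``obvious.''
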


Let $n,k,\Delta$ be integers satisfying the setting of Definition~\ref{defi:partial-CAC}.
It is clear that 
$$\text{PCAC}_{\Delta}(n,k)\supseteq \text{PCAC}_{\Delta+1}(n,k)\supseteq\cdots\supseteq \text{PCAC}_{n-1}(n,k) = \text{CAC}(n,k),$$
and thus 
$$M_{\Delta}(n,k)\geq M_{\Delta+1}(n,k)\geq\cdots\geq M_{n-1}(n,k) = M(n,k).$$
Here is an interesting observation.

\begin{lemma} \label{lem:large-delta}
Let $n,k$ be integers with $n>k>0$.
If $\Delta$ is an integer with $\lfloor\frac n2\rfloor \leq \Delta < n$, then $M_{\Delta}(n,k)=M(n,k)$.
\end{lemma}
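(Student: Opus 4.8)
The plan is to establish the two inequalities $M_{\Delta}(n,k)\geq M(n,k)$ and $M_{\Delta}(n,k)\leq M(n,k)$ separately. The first direction requires no new work: it is already recorded in the chain of inclusions preceding the lemma, since $\text{CAC}(n,k)=\text{PCAC}_{n-1}(n,k)\subseteq \text{PCAC}_{\Delta}(n,k)$ for every $\Delta\leq n-1$. So the whole content is to show that, under the hypothesis $\lfloor n/2\rfloor\leq\Delta<n$, every code $\mathcal{C}\in\text{PCAC}_{\Delta}(n,k)$ is in fact a genuine CAC. This yields $\text{PCAC}_{\Delta}(n,k)\subseteq\text{CAC}(n,k)$ and hence $M_{\Delta}(n,k)\leq M(n,k)$, completing the equality.

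The key tool is a reversal identity for the shifted inner product. Writing $(\mathcal{R}^{\tau}Y)(i)=Y((i-\tau)\bmod n)$ and reindexing the sum by $j=(i-\tau)\bmod n$, one checks that for every shift $\tau$ and every pair $X,Y$,
$$\sum_{i=0}^{n-1}X(i)(\mathcal{R}^{\tau}Y)(i)=\sum_{j=0}^{n-1}Y(j)(\mathcal{R}^{n-\tau}X)(j).$$
In words, the correlation of $X$ against $Y$ shifted by $\tau$ equals the correlation of $Y$ against $X$ shifted by $n-\tau$. This is exactly what lets me trade a large shift on one ordered pair for a small shift on the reversed ordered pair, and it is where the symmetry of the problem gets exploited.

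With this identity in hand, I would fix distinct $X,Y\in\mathcal{C}$ and an arbitrary shift $\tau\in\{0,1,\ldots,n-1\}$. If $\tau\leq\Delta$, the term $\sum_i X(i)(\mathcal{R}^{\tau}Y)(i)$ is bounded by $H_{\Delta}(X,Y)\leq 1$ directly. If instead $\tau>\Delta$, the identity rewrites this term as $\sum_j Y(j)(\mathcal{R}^{n-\tau}X)(j)$, and provided the shift $n-\tau$ is at most $\Delta$, it is controlled by $H_{\Delta}(Y,X)\leq 1$ (the defining condition of a $\text{PCAC}_{\Delta}$ applies to the ordered pair $(Y,X)$ as well as to $(X,Y)$). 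Since every individual term is thus at most $1$ and $\tau$ was arbitrary, we get $H(X,Y)=\max_{\tau}\sum_i X(i)(\mathcal{R}^{\tau}Y)(i)\leq 1$; combined with the standing fact $H(X,Y)\geq 1$ for $X\neq Y$, this gives $H(X,Y)=1$, so $\mathcal{C}$ is a CAC.

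The one thing to verify — and the only place where the hypothesis $\Delta\geq\lfloor n/2\rfloor$ is used — is the claim that every shift is covered: either $\tau\leq\Delta$ or $n-\tau\leq\Delta$. The uncovered shifts are exactly those with $\Delta<\tau<n-\Delta$, a set that is empty precisely when $2\Delta\geq n-1$, i.e. when $\Delta\geq (n-1)/2$; and $\lfloor n/2\rfloor\geq (n-1)/2$ for both parities of $n$, so the hypothesis is sufficient. This floor arithmetic is the main (and essentially only) obstacle, and it is entirely routine. I would close by remarking that $\lfloor n/2\rfloor$ is the exact threshold for this argument, being the smallest integer $\Delta$ that empties the uncovered set.
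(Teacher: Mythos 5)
Your proposal is correct and takes essentially the same route as the paper: the heart of both arguments is the symmetry trick of swapping the roles of $X$ and $Y$ so that a shift $\tau>\Delta$ becomes the shift $n-\tau\leq\Delta$, which is precisely where the hypothesis $\Delta\geq\lfloor n/2\rfloor$ is used, and the containment $\text{CAC}(n,k)\subseteq\text{PCAC}_{\Delta}(n,k)$ handles the easy inequality. The only (harmless) divergence is that the paper additionally proves the stronger statement $H_{\Delta}(X,Y)\geq 1$ via two disjoint runs of $\Delta+1$ zeroes, whereas you obtain the needed lower bound from the standing fact $H(X,Y)\geq 1$ for distinct $X,Y$, which indeed suffices.
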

\begin{proof}
We first claim that $H_\Delta(X,Y)\geq 1$ for any two distinct sequences $X,Y$ in $\mathcal{S}(n,k)$.
Assume to the contrary that $H_\Delta(X,Y)=0$. Pick any two indices $i,j$ with $X(i)=Y(j)=1$.
For every $\tau=0,1,\ldots,\Delta$, since $X(i)\mathcal{R}^\tau Y(i)=0$, we have $Y(i-\tau)=0$, where the addition is taking modulo $n$.
Similarly, there are consecutive $\Delta+1$ `zeroes' from $X(j-\Delta)$ to $X(j)$.
Since $X(i)=Y(j)=1$, those $2(\Delta+1)$ indices are distinct (see Figure~\ref{fig:large-delta}).
Then we have $2(\Delta+1)\leq n$, which contradicts to $\lfloor\frac n2\rfloor \leq \Delta$.

\begin{figure}[h]
\centering
\includegraphics[width=3.5in]{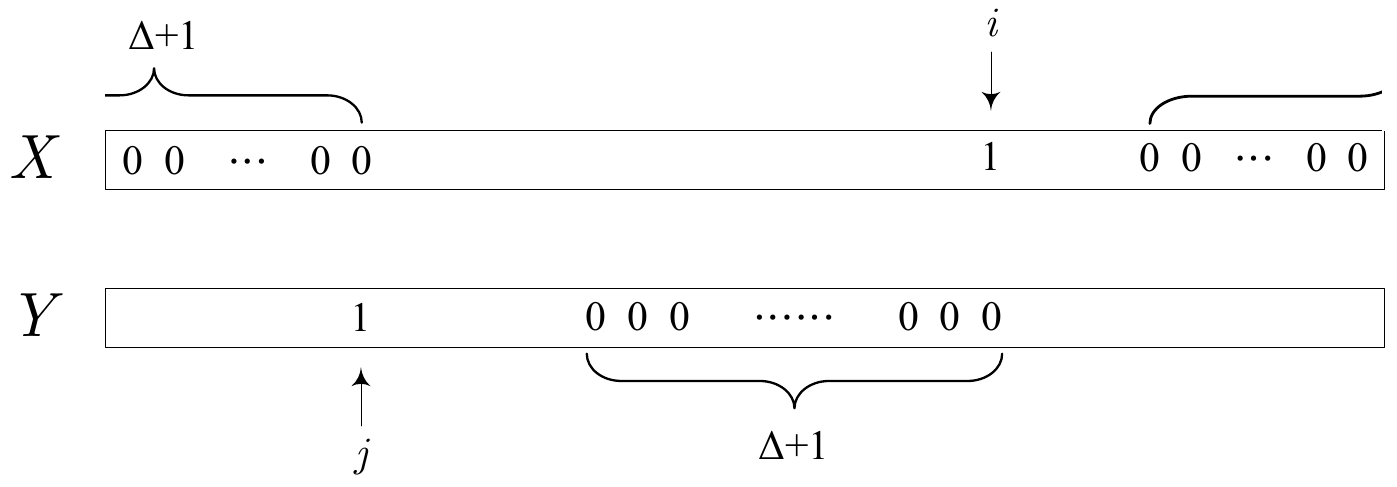}
\caption{Illustration of $X(i)=Y(j)=1$} \label{fig:large-delta}
\end{figure}

Let $\mathcal{C}\in \mbox{PCAC}_{\Delta}(n,k)$.
Above argument promises that $H_\Delta(X,Y)=1$ for any two distinct sequences $X,Y\in\mathcal{C}$.
We now claim that $\mathcal{C}\in \mbox{CAC}(n,k)$.
Assume to the contrary that there exist two distinct sequences $X,Y\in\mathcal{C}$ so that $H(X,Y)\geq 2$.
By symmetry there exist indices $i_1,i_2,j_1,j_2$ such that $X(i_1)=X(i_2)=1$ and $Y(j_1)=Y(j_2)=1$, where $i_1+\tau\equiv j_1$ (mod $n$) and $i_2+\tau\equiv j_2$ (mod $n$) for some $\tau\leq \Delta$. This contradicts to $H_{\Delta}(X,Y)= 1$.
Hence the proof is completed.
\qed
\end{proof}

\subsection{Graphic representation}\label{sec:PCAC_graph}

Let $\mathbb{Z}_n=\{0,1,\ldots,n-1\}$ denote the ring of residues modulo $n$.
Let $K_n$ denote the complete graph of order $n$ whose vertices are labeled by elements in $\mathbb{Z}_n$.
Given any subset $A\subseteq\mathbb{Z}_n$, let $C_A$ denote the \emph{clique} induced by $A$, namely, the subgraph with vertex set $A$ whose vertices are pairwise adjacent.
A clique of order $t$ is usually called a $t$-clique.
Given an integer $\Delta$ with $0\leq\Delta < n$, the \emph{supporting graph} of $A$ with respect to $\Delta$ is defined as
$$G_{\Delta}(A):= C_{A}\cup C_{A+1}\cup \cdots \cup C_{A+\Delta},$$
where $A+\tau=\{i+\tau ~(\mbox{mod }n) :\, i\in A\}$.
By putting the $n$ vertices of $K_n$ in clockwise direction from $0$ to $n-1$, $G_{\Delta}(A)$ can be viewed as the union of $(\Delta+1)$ $|A|$-cliques, each of which is obtained by rotating $C_{A}$ clockwise step by step.
For example, let $n=8$, $\Delta=2$ and $A=\{0,1,2\}$, $B=\{3,5,7\}$, then $A+1=\{1,2,3\}$, $A+2=\{2,3,4\}$, $B+1=\{4,6,0\}$ and $B+2=\{5,7,1\}$.
See Figure~\ref{fig:supporting} for the two supporting graphs: $G_{2}(A)$ and $G_{2}(B)$.

\begin{figure}[h]
\centering
\includegraphics[width=3.5in]{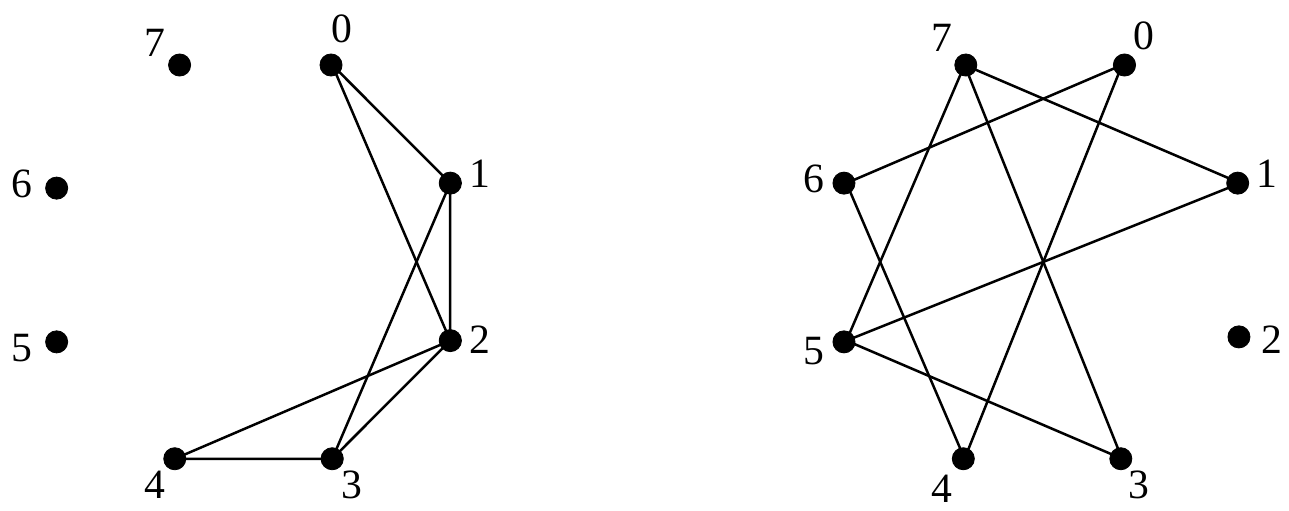}
\caption{$G_{2}(\{0,1,2\})$ and $G_{2}(\{3,5,7\})$ in $K_8$} \label{fig:supporting}
\end{figure}

For a binary sequence $X$ of length $n$, the \emph{characteristic set} of $X$ is given by
$$\mathcal{I}_X :=\{t\in\mathbb{Z}_n :\, X(t)=1\}.$$
A cyclic shift of $X$ by $\tau$ corresponds to a translation of $\mathcal{I}_X$ by $\tau$ in $\mathbb{Z}_n$, that is, $\mathcal{I}_{\mathcal{R}^\tau X} = \mathcal{I}_X + \tau$.
Let $n,k,\Delta$ be integers with $0<k<n$ and $0\leq\Delta<n$.
Given two distinct binary sequences $X,Y\in\mathcal{S}(n,k)$, it is easy to see that $H_{\Delta}(X,Y)\leq 1$ if and only if $G_{\Delta}(\mathcal{I}_X)$ and $G_{\Delta}(\mathcal{I}_Y)$ are edge-disjoint.

\begin{definition}\label{defi:packing}
Let $\mathcal{P}=\{P_1,P_2,\ldots,P_N\}$ be a set of $k$-subsets of $\mathbb{Z}_n$.
We say $\mathcal{P}$ is a \emph{$(k,\Delta)$-packing} of $K_n$ if $G_{\Delta}(P_i)$ and $G_{\Delta}(P_j)$ are edge-disjoint whenever $i\neq j$.
\end{definition}

The following follows directly from definitions.

\begin{proposition}\label{pro:PCAC_packing}
Let $n,k,\Delta$ be integers with $0<k<n$ and $0\leq\Delta<n$.
There exists a code $\mathcal{C}\in\text{PCAC}_{\Delta}(n,k)$ with $|\mathcal{C}|=N$ if and only if $K_n$ has a $(k,\Delta)$-packing $\mathcal{P}=\{P_1,P_2,\ldots,P_N\}$.
More precisely, $\mathcal{P}=\{\mathcal{I}_X:\,X\in\mathcal{C}\}$.
\end{proposition}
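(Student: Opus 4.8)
The plan is to route everything through the characteristic-set correspondence $X\mapsto\mathcal{I}_X$ and the edge-disjointness criterion established just before Definition~\ref{defi:packing}, which already does all the real work; what remains is bookkeeping. Since the claim is a biconditional, I would prove the two implications separately, but both collapse to the same observation. First I would record the elementary fact that $X\mapsto\mathcal{I}_X$ restricts to a bijection from $\mathcal{S}(n,k)$ onto the family of $k$-subsets of $\mathbb{Z}_n$: a weight-$k$ sequence has exactly $k$ positions equal to `one', so $\mathcal{I}_X$ is a $k$-subset, and distinct sequences produce distinct characteristic sets. In particular this map carries a size-$N$ collection to a size-$N$ collection and matches ``distinct'' on each side with ``distinct'' on the other.

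For the forward direction, suppose $\mathcal{C}\in\text{PCAC}_{\Delta}(n,k)$ with $|\mathcal{C}|=N$ and set $\mathcal{P}=\{\mathcal{I}_X:\,X\in\mathcal{C}\}$. By the bijection, $\mathcal{P}$ is a family of $N$ distinct $k$-subsets of $\mathbb{Z}_n$. For any two distinct members $\mathcal{I}_X,\mathcal{I}_Y$ of $\mathcal{P}$, the defining PCAC condition gives $H_{\Delta}(X,Y)\leq 1$, and the equivalence preceding Definition~\ref{defi:packing} then yields that $G_{\Delta}(\mathcal{I}_X)$ and $G_{\Delta}(\mathcal{I}_Y)$ are edge-disjoint. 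Hence $\mathcal{P}$ meets Definition~\ref{defi:packing} and is a $(k,\Delta)$-packing of $K_n$ of size $N$.

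For the converse, given a $(k,\Delta)$-packing $\mathcal{P}=\{P_1,\ldots,P_N\}$, let $X_i$ be the unique weight-$k$ sequence with $\mathcal{I}_{X_i}=P_i$ and put $\mathcal{C}=\{X_1,\ldots,X_N\}$; distinctness of the $P_i$ forces $|\mathcal{C}|=N$. For $i\neq j$ the packing condition says $G_{\Delta}(P_i)$ and $G_{\Delta}(P_j)$ are edge-disjoint, so the same equivalence gives $H_{\Delta}(X_i,X_j)\leq 1$, whence $\mathcal{C}\in\text{PCAC}_{\Delta}(n,k)$. I expect no genuine obstacle: the analytic content is entirely absorbed into the already-derived edge-disjointness criterion, and the only point needing any care is verifying the bijection so that cardinalities and the distinctness conditions transfer faithfully in both directions.
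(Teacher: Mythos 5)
Your write-up is in substance exactly the paper's argument: the paper supplies no proof at all for Proposition~\ref{pro:PCAC_packing} beyond the remark that it ``follows directly from definitions,'' the intended content being precisely your bookkeeping via the bijection $X\mapsto\mathcal{I}_X$ together with the edge-disjointness criterion stated just before Definition~\ref{defi:packing}. So you have taken the same route and merely made the routine details explicit.

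There is, however, one step you should flag, because the criterion you (and the paper) lean on is false as literally stated, and your forward direction uses precisely its false half. The correlation $H_\Delta$ is \emph{not} symmetric when $\Delta<\lfloor n/2\rfloor$: take $n=10$, $\Delta=2$, $\mathcal{I}_X=\{0,5\}$, $\mathcal{I}_Y=\{1,6\}$. Then $H_\Delta(X,Y)=0$ (the sets $\{1,6\},\{2,7\},\{3,8\}$ all miss $\{0,5\}$), yet $H_\Delta(Y,X)=2$ and the supporting graphs $G_\Delta(\mathcal{I}_X)$ and $G_\Delta(\mathcal{I}_Y)$ share the edges $\{1,6\}$ and $\{2,7\}$. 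Hence ``$H_\Delta(X,Y)\le 1$ for one ordered pair'' does not imply edge-disjointness; the correct statement is that $G_\Delta(\mathcal{I}_X)$ and $G_\Delta(\mathcal{I}_Y)$ are edge-disjoint if and only if \emph{both} $H_\Delta(X,Y)\le 1$ and $H_\Delta(Y,X)\le 1$ (a common edge lies in $C_{\mathcal{I}_X+\tau_1}\cap C_{\mathcal{I}_Y+\tau_2}$ with $\tau_1,\tau_2\in\{0,\ldots,\Delta\}$; translating back by $\min(\tau_1,\tau_2)$ violates one inequality or the other according to the sign of $\tau_2-\tau_1$). The repair to your proof is one line: Definition~\ref{defi:partial-CAC} imposes $H_\Delta(X,Y)\le 1$ for \emph{all ordered} pairs of distinct codewords, so for distinct $X,Y\in\mathcal{C}$ you may quote both inequalities and then apply the symmetrized criterion to get edge-disjointness. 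Your converse direction uses only the true implication (a shared edge at relative shift in $\{0,\ldots,\Delta\}$ forces some $H_\Delta$ value at least $2$), so it stands as written; with the symmetrization inserted, the proposition and your proof are both sound.
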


A $(k,\Delta)$-packing $\mathcal{P}$ of $K_n$ is said to be \emph{maximum} if the size of $\mathcal{P}$ is maximum.
That is, a maximum $(k,\Delta)$-packing of $K_n$ is equivalent to an optimal $\text{PCAC}_\Delta$ of length $n$ and weight $k$.

\subsection{Disjoint difference set}

\begin{definition}\label{defi:DDS}
An \emph{$(n,k,r)$-disjoint difference set} (\emph{DDS}) is a family $\{B_1,B_2,$ $\ldots,B_r\}$ of $k$-subsets of $\mathbb{Z}_n$ such that among the differences $\{x-y:\,x,y\in B_i,x\neq y,1\leq i\leq r\}$ each nonzero element $g\in\mathbb{Z}_n$ occurs at most once.
\end{definition}

A necessary condition for the existence of an $(n,k,r)$-DDS is
\begin{equation}\label{eq:DDS_necessary}
n\geq r k(k-1)+1.
\end{equation}
An $(n,k,r)$-DDS is called as an \emph{$(n,k)$-difference family} (DF) if the equality in \eqref{eq:DDS_necessary} holds.
That is, an $(n,k)$-DF is an $(n,k,\frac{n-1}{k(k-1)})$-DDS.

Let $\{B_1,B_2,\ldots,B_r\}$ be an $(n,k,r)$-DDS.
It is easy to check that for any $\Delta,t,t'\geq 0$, the two cliques $C_{B_i+t}$ and $C_{B_i+t'}$ have no common edges whenever $t\neq t'$, and the two supporting graphs $G_\Delta(B_i+t)$ and $G_\Delta(B_j+t')$ are edge-disjoint whenever $i\neq j$.
Hence, we have the following proposition.

\begin{proposition}\label{pro:DDS_packing}
Let $\{B_1,B_2,\ldots,B_r\}$ be an $(n,k,r)$-DDS.
For $0\leq\Delta<n$, there exists a $(k,\Delta)$-packing of $K_n$ with size $r\left\lfloor\frac{n}{\Delta+1}\right\rfloor$.
\end{proposition}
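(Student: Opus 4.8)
The plan is to produce the packing explicitly by translating each block of the DDS through multiples of $\Delta+1$. Concretely, I would take
$$\mathcal{P} := \left\{\, B_i + s(\Delta+1) :\ 1\leq i\leq r,\ 0\leq s\leq \left\lfloor\tfrac{n}{\Delta+1}\right\rfloor-1 \,\right\},$$
a family indexed by $r\lfloor n/(\Delta+1)\rfloor$ pairs $(i,s)$. It then remains to verify the edge-disjointness demanded by Definition~\ref{defi:packing} for every pair of members, and to check that the members are pairwise distinct so that the stated size is attained.

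For the edge-disjointness I would split into two cases according to whether two members share a base block. If the base blocks differ, the conclusion is handed to us by the remark preceding the statement: $G_\Delta(B_i+t)$ and $G_\Delta(B_j+t')$ are edge-disjoint for all shifts whenever $i\neq j$. The case that carries the content is $i=j$ with $s\neq s'$. Expanding $G_\Delta(B_i+s(\Delta+1))=\bigcup_{u=0}^{\Delta}C_{B_i+s(\Delta+1)+u}$ and likewise for $s'$, it is enough to show that every clique $C_{B_i+s(\Delta+1)+u}$ is edge-disjoint from every $C_{B_i+s'(\Delta+1)+v}$ with $0\leq u,v\leq\Delta$. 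The same remark tells us that two translates $C_{B_i+t}$ and $C_{B_i+t'}$ of a single block meet in an edge only when $t\equiv t'\pmod n$, so all I really need is that the shift parameters occurring in $\mathcal{P}$ are pairwise distinct modulo $n$.

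This last point is the crux, and it is exactly what fixes the step size at $\Delta+1$. As $u$ runs through $\{0,\ldots,\Delta\}$ the quantity $s(\Delta+1)+u$ fills the integer window $[\,s(\Delta+1),\,(s+1)(\Delta+1)-1\,]$, and for $s=0,1,\ldots,\lfloor n/(\Delta+1)\rfloor-1$ these are pairwise disjoint consecutive blocks of length $\Delta+1$ whose union lies in $[\,0,\,\lfloor n/(\Delta+1)\rfloor(\Delta+1)-1\,]\subseteq[0,n-1]$, the inclusion being valid precisely because $\lfloor n/(\Delta+1)\rfloor(\Delta+1)\leq n$. Thus every shift parameter of $\mathcal{P}$ is a genuinely distinct residue in $\{0,1,\ldots,n-1\}$; in particular $s\neq s'$ yields $s(\Delta+1)+u\not\equiv s'(\Delta+1)+v\pmod n$ for all $u,v$, which delivers the required clique-disjointness and hence the edge-disjointness of $G_\Delta(B_i+s(\Delta+1))$ and $G_\Delta(B_i+s'(\Delta+1))$.

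Distinctness of the members then comes for free whenever $k\geq 2$: if two members were equal their supporting graphs would coincide rather than be edge-disjoint, yet each $G_\Delta$ carries at least one edge, a contradiction. Counting the $r\lfloor n/(\Delta+1)\rfloor$ index pairs completes the argument. I expect the only real obstacle to be the bookkeeping of the previous paragraph — making sure the shift parameters sweep through a single period $[0,n-1]$ with no wrap-around, since it is precisely this that stops a block from colliding with its own translate.
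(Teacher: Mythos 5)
Your proof is correct and takes essentially the same approach as the paper: both construct the packing as the supporting graphs $G_\Delta\bigl(B_i+s(\Delta+1)\bigr)$ for $1\leq i\leq r$ and $0\leq s\leq\lfloor n/(\Delta+1)\rfloor-1$, and both rest on the observation (stated just before the proposition) that the DDS property forces edge-disjointness across distinct base blocks and across inequivalent translates of a single block. The only difference is that you explicitly carry out the bookkeeping—showing the shift parameters $s(\Delta+1)+u$ sweep through distinct residues in $[0,n-1]$ without wrap-around—which the paper leaves implicit.
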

\begin{proof}
By the observation above, the set of supporting graphs $G_\Delta(B_i+t)$ for $i=1,2,\ldots,r$ and $t=0,(\Delta+1),2(\Delta+1),\ldots, (\lfloor\frac{n}{\Delta+1}\rfloor-1)(\Delta+1)$ will form a $(k,\Delta)$-packing of $K_n$.
This concludes the proof.
\qed
\end{proof}

Combining Proposition~\ref{pro:UI_PCAC}, \ref{pro:PCAC_packing} and \ref{pro:DDS_packing}, we conclude that

\begin{theorem}\label{thm:UI_DF}
If there exists an $(n,k,r)$-DDS, then for $0\leq\Delta<n$, there exists an $(n,k;\Delta)$-UI sequence set of size 
\begin{equation}\label{eq:UI_size}
N=r\left\lfloor\frac{n}{\Delta+1}\right\rfloor.
\end{equation}
\end{theorem}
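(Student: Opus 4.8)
The plan is to simply chain together the three propositions that have already been established, since the theorem is stated as a corollary of Propositions~\ref{pro:UI_PCAC}, \ref{pro:PCAC_packing}, and \ref{pro:DDS_packing}. The diagram in Section~\ref{sec:combin} makes the logical flow transparent: an $(n,k,r)$-DDS produces a $(k,\Delta)$-packing, a $(k,\Delta)$-packing is equivalent to a $\text{PCAC}_\Delta$, and a $\text{PCAC}_\Delta$ is automatically an $(n,k;\Delta)$-UI sequence set. The only quantity I need to track carefully through the chain is the size, and since each of the three steps preserves cardinality exactly, the final size will be precisely the packing size delivered by Proposition~\ref{pro:DDS_packing}.

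Concretely, I would start from the hypothesized $(n,k,r)$-DDS, say $\{B_1,B_2,\ldots,B_r\}$. First I would invoke Proposition~\ref{pro:DDS_packing} to obtain, for the given $\Delta$ with $0\leq\Delta<n$, a $(k,\Delta)$-packing $\mathcal{P}$ of $K_n$ of size $r\lfloor\frac{n}{\Delta+1}\rfloor$; recall this packing is built explicitly from the translates $G_\Delta(B_i+t)$ with $t$ ranging over the multiples of $\Delta+1$ below $n$. Next I would apply the ``packing $\Rightarrow$ code'' direction of the equivalence in Proposition~\ref{pro:PCAC_packing} to convert $\mathcal{P}$ into a code $\mathcal{C}\in\text{PCAC}_\Delta(n,k)$ with $|\mathcal{C}|=|\mathcal{P}|=r\lfloor\frac{n}{\Delta+1}\rfloor$, where the sequences of $\mathcal{C}$ are exactly the characteristic sequences of the blocks of $\mathcal{P}$. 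Finally, Proposition~\ref{pro:UI_PCAC} guarantees that this code $\mathcal{C}$ is itself an $(n,k;\Delta)$-UI sequence set, with size $N=|\mathcal{C}|$, which yields the asserted value in~\eqref{eq:UI_size}.

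There is essentially no analytical obstacle here, as all the combinatorial content resides in the three prior propositions; the argument is a bookkeeping composition. The one point that merits a sentence of care is confirming that the cardinality is genuinely preserved and not merely bounded at each arrow: the DDS-to-packing step produces a packing of \emph{exactly} size $r\lfloor\frac{n}{\Delta+1}\rfloor$ (the translates are pairwise distinct because the chosen shifts $t$ are distinct modulo $n$ and strictly fewer than $n$), the packing-to-PCAC step is a bijection between blocks and codewords, and the PCAC-to-UI step leaves the underlying set unchanged. Once these equalities are noted, the conclusion $N=r\lfloor\frac{n}{\Delta+1}\rfloor$ follows immediately, completing the proof.
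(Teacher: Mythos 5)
Your proposal is correct and follows exactly the paper's own route: the paper states this theorem as an immediate consequence of chaining Propositions~\ref{pro:DDS_packing}, \ref{pro:PCAC_packing}, and \ref{pro:UI_PCAC}, which is precisely your argument. Your extra remark verifying that each step preserves cardinality exactly (rather than merely bounding it) is a harmless refinement of what the paper leaves implicit.
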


\smallskip
In order to obtain $(n,k,r)$-DDSs, we revisit a useful combinatorial structure called difference triangle sets.

%
%

\begin{definition} \label{defi:DTS}
A \emph{normalized $(r,k)$-difference triangle set} (\emph{DTS} for short) is a family $\{B_1,B_2,\ldots,B_r\}$, where $B_i=\{b_{i0},b_{i1},\ldots,b_{ik}\}$, $1\leq i\leq r$, are sets of integers such that $0=b_{i0}<b_{i1}<\cdots<b_{ik}$, for all $i$, and such that the differences $b_{ij'}-b_{ij}$ with $1\leq i\leq r$ and $0\leq j<j'\leq k$ are all distinct.
The \emph{scope} of an $(r,k)$-DTS is the maximum integer among $\{b_{1k},b_{2k},\ldots,b_{r k}\}$.
\end{definition}

It is known that a DDS can be obtained from a DTS.

\begin{theorem}\cite{Shearer_2007}\label{thm:DDS_DTS}
An $(r,k-1)$-DTS of scope $m$ is an $(n,k,r)$-DDS for all $n\geq 2m+1$.
\end{theorem}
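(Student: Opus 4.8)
The plan is to track how the differences of each block behave when we pass from the integers to the cyclic group $\mathbb{Z}_n$. Since the DTS is normalized with scope $m$, each block $B_i=\{b_{i0},b_{i1},\ldots,b_{i,k-1}\}$ satisfies $0=b_{i0}<b_{i1}<\cdots<b_{i,k-1}\leq m<n$, so each $B_i$ reduces to a genuine $k$-subset of $\mathbb{Z}_n$, and every \emph{positive} difference $b_{ij'}-b_{ij}$ (with $j<j'$) lies in $\{1,2,\ldots,m\}$. The defining property of the DTS is precisely that these positive differences, ranging over all $i$ and all $0\leq j<j'\leq k-1$, are pairwise distinct as integers.

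First I would observe that when each $B_i$ is viewed as a subset of $\mathbb{Z}_n$, the difference set $\{x-y:\,x,y\in B_i,\,x\neq y\}$ contains, for each pair $\{b_{ij},b_{ij'}\}$, both the positive difference $b_{ij'}-b_{ij}$ and its negation $b_{ij}-b_{ij'}$. Reducing the latter modulo $n$ gives $n-(b_{ij'}-b_{ij})$, which lies in $\{n-m,n-m+1,\ldots,n-1\}$. Thus every difference occurring in the putative DDS falls into one of the two blocks of residues $\{1,\ldots,m\}$ or $\{n-m,\ldots,n-1\}$ of $\mathbb{Z}_n$.

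The key step is to verify that these two blocks are disjoint and that no collision occurs inside either one. Since $n\geq 2m+1$ we have $m<n-m$, so $\{1,\ldots,m\}$ and $\{n-m,\ldots,n-1\}$ are disjoint; in particular a positive difference can never coincide modulo $n$ with a negated one. Equivalently, a hypothetical equality $d=n-d'$ between two positive differences would force $n=d+d'\leq 2m$, contradicting $n\geq 2m+1$. Distinctness within the positive block is immediate from the DTS property, since each such difference is at most $m<n$ and so undergoes no reduction; distinctness within the negative block follows because $n-d=n-d'$ forces $d=d'$, reducing again to the distinctness of the positive differences.

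Combining these observations, every nonzero element of $\mathbb{Z}_n$ arises at most once among the differences $\{x-y:\,x,y\in B_i,\,x\neq y,\,1\leq i\leq r\}$, which is exactly the condition defining an $(n,k,r)$-DDS. I do not anticipate any genuine obstacle: the whole argument is a disjointness-of-intervals check, and the only point requiring care is the modular reduction of the negated differences together with the verification that the hypothesis $n\geq 2m+1$ is exactly what keeps the two blocks of residues apart.
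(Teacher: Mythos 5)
Your proof is correct and complete: the reduction of each block modulo $n$, the observation that the differences fall into the two intervals $\{1,\ldots,m\}$ and $\{n-m,\ldots,n-1\}$, and the check that $n\geq 2m+1$ is exactly what prevents a collision $d+d'=n$ together give the DDS property. Note that the paper itself supplies no proof of this statement---it is quoted from the handbook chapter of Shearer---so there is nothing to compare against; your argument is the standard one underlying that citation, and it handles all three collision cases (positive--positive, negative--negative, and positive--negative, including $d=d'$) correctly.
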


Please refer to \cite{Chen_1994,Chee_Colbourn_1997,Chu_Colbourn_Golomb_2005,Chen_Fan_Jin_1992,Ling_2002,Shearer_2007} for more information on DDSs and DTSs.
Note that a DDS is also named as a \emph{difference packing (DP)} in literature.

\subsection{An example}
We use an example to illustrate our idea.
Suppose that we aim to construct a $(19,3;5)\text{-UI}$ set of size as large as possible.
The first step is to find a $(19,3,3)\text{-DDS}: B_1=\{0,4,5\},B_2=\{0,6,8\},B_3=\{0,7,10\}$.
Note that $\{B_1,B_2,B_3\}$ forms a difference family.
By Proposition~\ref{pro:DDS_packing}, we have a $(3,5)$-packing of $K_{19}$ as follows:
\begin{align*}
\text{From } B_1&: \{0,4,5\}, \{6,10,11\}, \{12,16,17\},\\
\text{From } B_2&: \{0,6,8\}, \{6,12,14\}, \{12,18,1\},\\
\text{From } B_3&: \{0,7,10\}, \{6,13,16\}, \{12,0,3\}.
\end{align*}
Therefore, by Proposition~\ref{pro:UI_PCAC} and \ref{pro:PCAC_packing}, the $9$ desired sequences are listed below.
$$
\begin{array}{crrr}
B_1: & 1000110000000000000, & 0000001000110000000, & 0000000000001000110, \\
B_2: & 1000001010000000000, & 0000001000001010000, & 0100000000001000001, \\
B_3: & 1000000100100000000, & 0000001000000100100, & 1001000000001000000.
\end{array}
$$

Let us consider a network of $9$ potential users with the constraint that at most $3$ of them are active at the same time and the maximum relative shift is $5$.
Then above example (PCAC approach) provides a solution with sequence length $n=19$.
If we consider TDMA approach, the length of sequences must be larger than $9\times 5=45$.
If we consider GF (or RS code) approach, by taking $k=3, \Delta=5$ and $N\geq 9$ into Theorem~\ref{thm:Galois_field}, we have $m\geq 2$ and $q\geq 3$, and thus $n\geq (5+1)\times 3^2=54$.
This indicates that applying PCAC approach is more efficient than the other two methods.
We will study this phenomenon in more details in the subsequent section.

\subsection{Remarks}
It must be noted that the connection in Proposition~\ref{pro:DDS_packing} is an old fashion.
In fact, such a link is widely used to construct a block design from a difference family, see \cite{Chen_Fan_Jin_1992,Shearer_2007}. 
However, it is new to connect it with CAC or protocol sequences.
If we let $D(B)$ denote the set of differences of any two elements in a set $B\subset\mathbb{Z}_n$, then any two sequences $X$ and $Y$ in a CAC have the property that $D(\mathcal{I}_X)\cap D(\mathcal{I}_Y) = \emptyset$.
Since the quantity of sequences is what counts here, a good (or optimal) CAC is designed to make sure each $|D(\mathcal{I}_X)|$ is as small as possible, which is different from the demand of a difference family or a disjoint difference set.

\section{New construction of UI sequence sets}\label{sec:MainResult}

In this section, we first construct a few families of UI sequence sets by means of disjoint difference sets, and then compare them with the UI sequence sets produced in Section~\ref{sec:UI}.

Singer~\cite{Singer_1938} constructed $(q^2+q+1,q+1,1)$-DDS, and Bose~\cite{Bose_1942} constructed $(q^2-1,q,1)$-DDS, where $q$ is a prime power. With these DDSs and a construction of Colbourn-Bolbourn~\cite{Colbourn_1984}, Chen-Fan-Jin~\cite{Chen_Fan_Jin_1992} proposed two infinite families of disjoint difference sets.

\begin{theorem}\cite{Chen_Fan_Jin_1992}
Let $q$ be a prime power.
\begin{enumerate}[(a)]
\item There exists an $\left(r(q^2+q+1),q+1,r\right)$-DDS for any prime $r>q$.
\item There exists an $\left(r(q^2-1),q,r\right)$-DDS for any prime $r\geq q$.
\end{enumerate}
\end{theorem}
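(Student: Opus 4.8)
The plan is to prove both parts uniformly by \emph{lifting} a single base difference set into a larger cyclic group, which is the essence of the Colbourn--Colbourn composition~\cite{Colbourn_1984}. For part (a) I start from the Singer $(q^2+q+1,q+1,1)$-DDS and for part (b) from the Bose $(q^2-1,q,1)$-DDS; in either case I write the base block as $D=\{d_0,d_1,\ldots,d_{k-1}\}\subseteq\mathbb{Z}_v$, where $(v,k)=(q^2+q+1,q+1)$ in case (a) and $(v,k)=(q^2-1,q)$ in case (b). The only feature I will actually use is that $D$ is a $(v,k,1)$-DDS, i.e.\ every nonzero $g\in\mathbb{Z}_v$ is realised as an ordered difference $d_{j'}-d_j$ \emph{at most} once; this is why the Bose set (which is not a perfect difference set) can be treated on the same footing as the Singer set.

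First I would set up the reduction-modulo-$v$ homomorphism $\pi:\mathbb{Z}_{rv}\to\mathbb{Z}_v$. Its kernel is the order-$r$ subgroup of multiples of $v$, so the fibre over each $g\in\mathbb{Z}_v$ is $\{g+sv:0\le s\le r-1\}$ and $s\mapsto g+sv$ is injective on $\{0,\ldots,r-1\}$. Taking $d_j$ to be the representative of $d_j$ in $\{0,\ldots,v-1\}\subseteq\mathbb{Z}_{rv}$, I define for each $t\in\{0,1,\ldots,r-1\}$ the lifted block
$$B_t:=\{\,d_j+v\,t\,j \ (\bmod\ rv):\ 0\le j\le k-1\,\}.$$
Each $B_t$ has exactly $k$ elements, since $\pi(B_t)=D$ and the distinct $d_j$ lie in distinct fibres, so the lifts cannot collide.

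The heart of the argument is the difference computation. A nonzero ordered difference inside $B_t$ equals $(d_{j'}-d_j)+v\,t(j'-j)$ in $\mathbb{Z}_{rv}$, and it projects under $\pi$ to $g:=d_{j'}-d_j\in\mathbb{Z}_v$. If $g=0$ there is nothing to check, because $d_{j'}\not\equiv d_j\pmod v$ for $j\neq j'$; if $g\neq0$ then, since $D$ is a $(v,k,1)$-DDS, the pair $(j,j')$ is the \emph{unique} ordered pair producing $g$. Hence the only differences of the whole family $\{B_0,\ldots,B_{r-1}\}$ that project to this fixed $g$ are
$$\{\,g+v\,t(j'-j):\ 0\le t\le r-1\,\},$$
and these $r$ elements are distinct in $\mathbb{Z}_{rv}$ exactly when $t\mapsto t(j'-j)$ is a bijection of $\mathbb{Z}_r$. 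Because $r$ is prime and $0<|j'-j|<k\le r$, the element $j'-j$ is invertible modulo $r$, so this map is indeed a permutation and the $r$ lifted differences run through the entire fibre of $g$ without repetition. Differences projecting to distinct elements of $\mathbb{Z}_v$ are automatically unequal, so every nonzero element of $\mathbb{Z}_{rv}$ occurs at most once among all ordered differences, which is precisely the $(rv,k,r)$-DDS property.

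I expect the only delicate point to be this permutation step, where the primality of $r$ and the bound $r\ge k$ are \emph{simultaneously} needed: $r\ge k$ ensures the labels $0,1,\ldots,k-1$ are distinct modulo $r$ (so all $j'-j$ are nonzero), while primality upgrades each such $j'-j$ to a unit so that $t\mapsto t(j'-j)$ cycles through all of $\mathbb{Z}_r$. Finally, substituting $(v,k)=(q^2+q+1,q+1)$ with $r>q$ (equivalently $r\ge q+1=k$) yields the $(r(q^2+q+1),q+1,r)$-DDS of part~(a), and $(v,k)=(q^2-1,q)$ with $r\ge q=k$ yields the $(r(q^2-1),q,r)$-DDS of part~(b), completing both cases. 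A sanity check on $q=2$ (Singer set $\{0,1,3\}\subseteq\mathbb{Z}_7$, $r=3$) produces the genuine $(21,3,3)$-DDS $\{0,1,3\}$, $\{0,8,17\}$, $\{0,10,15\}$, confirming that the bookkeeping leaves exactly the $r-1$ multiples of $v$ uncovered.
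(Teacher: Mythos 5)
Your proof is correct, and it follows essentially the same route the paper points to: it states this theorem as a citation to Chen--Fan--Jin, describing the construction as the Singer/Bose base sets combined with the Colbourn--Colbourn composition, which is exactly what your multiplication-table lifting $B_t=\{d_j+vtj \bmod rv\}$ implements. Your key verification---that $r$ prime together with $r\ge k$ makes $t\mapsto t(j'-j)$ a permutation of $\mathbb{Z}_r$, so the lifted differences fill each fibre of $\pi$ without repetition---is sound, and it correctly uses only the ``at most once'' property of the base set, which is what lets the (non-perfect) Bose set be handled uniformly with the Singer set under the paper's Definition~\ref{defi:DDS}.
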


%
%
%
%

By Theorem~\ref{thm:UI_DF}, we have the following result.

\begin{theorem} \label{thm:constant-k}
Let $q$ be a prime power.
\begin{enumerate}[(a)]
\item For $r=1$ or $r>q$ is a prime, there exists an $\left(r(q^2+q+1),q+1;\Delta\right)$-UI sequence set with size $$N=r\left\lfloor\frac{r(q^2+q+1)}{\Delta+1}\right\rfloor.$$
\item For $r=1$ or $r\geq q$ is a prime, there exists an $\left(r(q^2-1),q;\Delta\right)$-UI sequence set with size $$N=r\left\lfloor\frac{r(q^2-1)}{\Delta+1}\right\rfloor.$$
\end{enumerate}
\end{theorem}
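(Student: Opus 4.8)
The plan is to apply Theorem~\ref{thm:UI_DF} directly to each of the two infinite families of disjoint difference sets supplied by the Chen--Fan--Jin theorem. The entire statement is a transcription of the DDS parameters into UI-sequence parameters, so no new combinatorics is required: the work of building edge-disjoint supporting graphs has already been done inside Proposition~\ref{pro:DDS_packing} and its composition with Propositions~\ref{pro:UI_PCAC} and~\ref{pro:PCAC_packing} that yields Theorem~\ref{thm:UI_DF}.

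For part~(a), I would take the $\left(r(q^2+q+1),q+1,r\right)$-DDS guaranteed by the Chen--Fan--Jin theorem whenever $r>q$ is a prime, and separately note the classical Singer $(q^2+q+1,q+1,1)$-DDS, which covers the case $r=1$. Writing $n=r(q^2+q+1)$ and $k=q+1$, I would feed this $(n,k,r)$-DDS into Theorem~\ref{thm:UI_DF}: for any $\Delta$ with $0\leq\Delta<n$, there exists an $(n,k;\Delta)$-UI sequence set of size
\begin{equation*}
N=r\left\lfloor\frac{n}{\Delta+1}\right\rfloor=r\left\lfloor\frac{r(q^2+q+1)}{\Delta+1}\right\rfloor,
\end{equation*}
which is exactly the claimed formula. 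Part~(b) is entirely parallel: one invokes the $\left(r(q^2-1),q,r\right)$-DDS for $r\geq q$ prime (and the Bose $(q^2-1,q,1)$-DDS for $r=1$), sets $n=r(q^2-1)$ and $k=q$, and applies Theorem~\ref{thm:UI_DF} to obtain size $N=r\lfloor r(q^2-1)/(\Delta+1)\rfloor$.

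The only point demanding any care is a consistency check rather than a genuine obstacle. Theorem~\ref{thm:UI_DF} requires $0\leq\Delta<n$, while Definition~\ref{defi:partial-CAC} also implicitly assumes $0<k<n$; I would verify that the chosen parameters respect these bounds. For part~(a) this amounts to $q+1<r(q^2+q+1)$, which holds trivially for every prime power $q$ and every admissible $r\geq 1$, and likewise $q<r(q^2-1)$ in part~(b) holds for all $q\geq 2$. Since $r\geq 1$ in all cases and the floor in~\eqref{eq:UI_size} is nonnegative, the size formula is well defined. Thus the proof is a two-line citation of the previous theorem applied to each family, and I anticipate the author's proof to read essentially as ``Apply Theorem~\ref{thm:UI_DF} to the DDSs of the preceding theorem.''
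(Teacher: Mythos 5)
Your proposal is correct and is exactly the paper's argument: the paper proves this theorem by directly applying Theorem~\ref{thm:UI_DF} to the Chen--Fan--Jin $\left(r(q^2+q+1),q+1,r\right)$- and $\left(r(q^2-1),q,r\right)$-DDSs, with the Singer and Bose constructions covering $r=1$, just as you describe.
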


Theorem~\ref{thm:constant-k} provides a new method to construct $(n,k;\Delta)$-UI sequence sets for some particular $n$.
We now investigate the properties of the three constructions: PCAC, TDMA and GF (or RS code) methods.
See the following chart for the comparisons.

\begin{table}[h]
\setlength{\belowcaptionskip}{0pt}
\scriptsize
\begin{tabular}{|c||c|c|c|c|}
\hline
 & potential users & sequence period & active users & \\ \hline \hline
\multirow{2}{*}{PCAC} & $r\left\lfloor\frac{r(q^2+q+1)}{\Delta+1}\right\rfloor$ & $r(q^2+q+1)$ & 
$q+1$ & $\begin{array}{c} q\text{ is a prime power and} \\ r=1 \text{ or } r>q \text{ is a prime} \end{array}$ \\
\cline{2-5}
 & $r\left\lfloor\frac{r(q^2-1)}{\Delta+1}\right\rfloor$ 
& $r(q^2-1)$ & $q$ & $\begin{array}{c} q\text{ is a prime power and} \\ r=1 \text{ or } r\geq q \text{ is a prime} \end{array}$ \\ 
\hline
$\begin{array}{c} \text{GF} \\ \text{RS code} \end{array}$ & $q^m$ & $q^2(\Delta+1)$ & $k$
& $\begin{array}{c} q\text{ is a prime power and} \\ q\geq (k-1)(m-1)+1 \end{array}$ \\ \hline
TDMA & $k$ & $k(\Delta+1)$ & $k$ & \\ 
\hline
\end{tabular}
\caption{Comparison of three approaches \label{tab:comparison}}
\normalsize
\end{table}

We first consider the case that all potential users can be active at the same time;
see Figure~\ref{fig_comparison_1} for examples.
For simple illustration, we fix the number of active users (or potential users) to be $k=p^2+1$ and $\Delta=p^{3/2}$ or $p^2-1$, where $p$ is a prime.
In order to attain $p^2+1$ active users, by Table~\ref{tab:comparison}, the sequence period provided by PCAC approach is at least $p^4+p^2+1$ (i.e., $r=1$ of Case $(a)$), and by GF/RS code approach is at least $(p^2+1)^2(\Delta+1)$ (since the parameter $q\geq p^2+1$ in this case).
Note that the curves of TDMA and PCAC approaches overlap in Figure~\ref{fig_comparison_1} (right) since the original sequence periods provided by them differ by 1 ($p^4+p^2+1$ for PCAC and $p^4+p^2$ for TDMA).

\begin{figure}[h]
\centering
\includegraphics[width=4.7in]{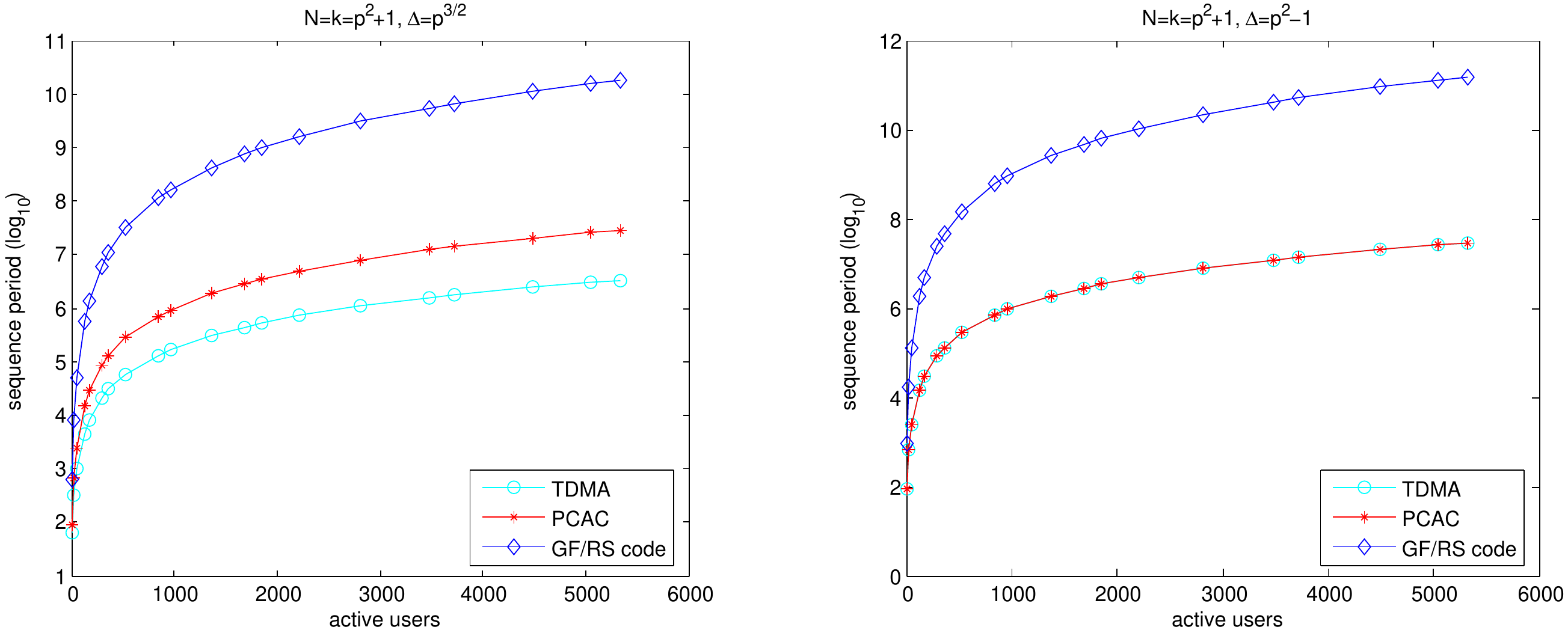}
\caption{$(n,k;(k-1)^{3/4})$-UI and $(n,k;(k-2))$-UI sequence sets for $k=p^2+1$, where $p$ is a prime between $3$ and $73$} \label{fig_comparison_1}
\end{figure}

\noindent
The result reveals that when the number of potential users is almost equal to the maximum number of active users in a system, the TDMA approach has a better performance, where the difference between it with PCAC approach is getting smaller as $\Delta$ approaches $k$.

In practice, however, the number of potential users is much larger than the maximum number of active ones.
Consider the following two cases, shown in Figure~\ref{fig:comparison_2}: 
The number of active users $k$ is set to be a prime $p$, the numbers of potential users is $p^3$, and $\Delta$ is $p-1$ or $p^2-1$.
For PCAC approach, we adopt the Case $(b)$ by letting $r=p$ in the case of $\Delta=p-1$, and $r$ be the smallest prime larger than $p^{3/2}$ in the case of $\Delta=p^2-1$.
By Table~\ref{tab:comparison}, the period of sequences with respect to PCAC (resp. GF/RS code and TDMA) approach is approximately $p^3$ (resp. $4p^3$ and $p^4$) in the first case where $\Delta=k^2$, and approximately $p^{7/2}$ (resp. $4p^4$ and $p^5$) in the second one, where $\Delta=k^3$.
Note that the parameter $m$ in GF/RS code approach is taken to be $3$ to attain the corresponding code size.
One can see that in these two cases, the PCAC approach is much more efficient than the other schemes.

\begin{figure}[h]
\centering
\includegraphics[width=4.7in]{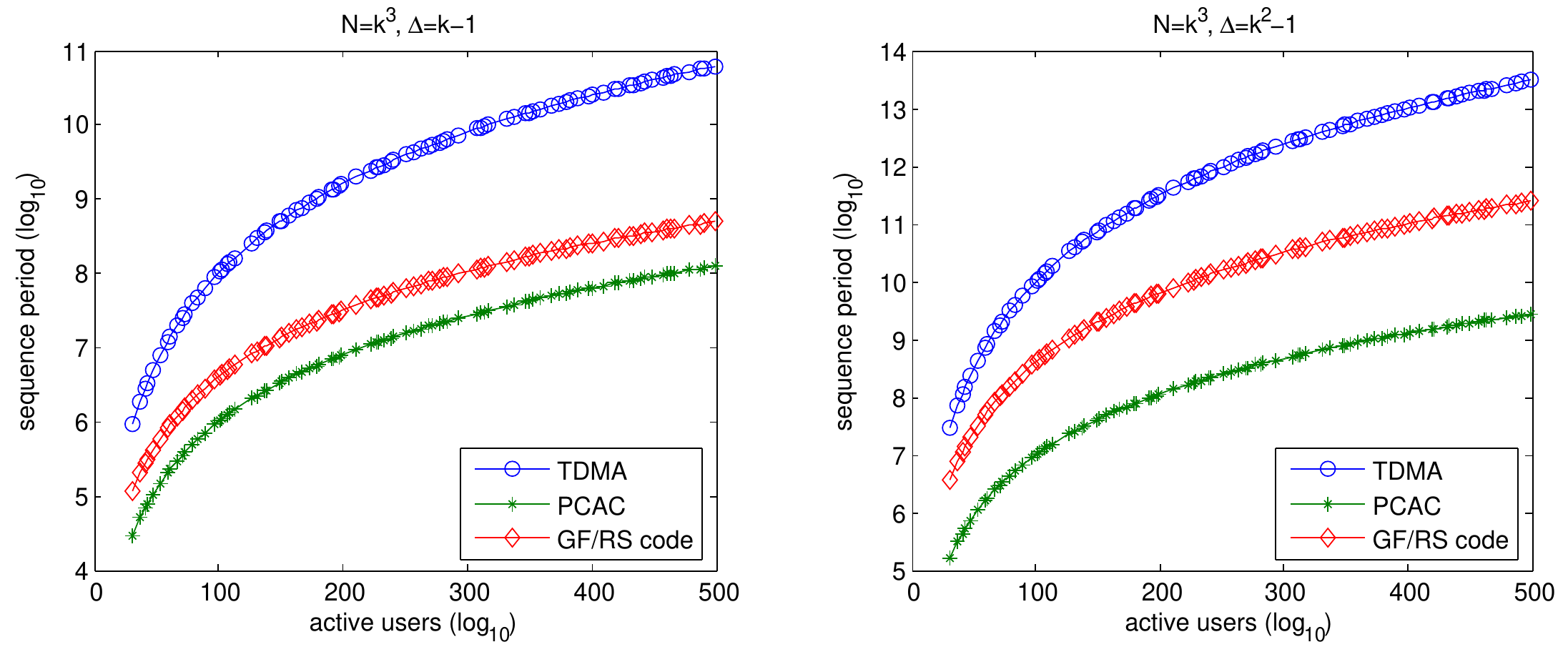}
\caption{$(n,k;k-1)$-UI and $(n,k;k^2-1)$-UI sequence sets with size $k^3$, where $k$ is a prime between 31 and 499} 
\label{fig:comparison_2}
\end{figure}

Roughly speaking, by Table~\ref{tab:comparison}, the PCAC approach provides an $(n,k;\Delta)$-UI sequence set of length $O(k\sqrt{N\Delta})$, while the lengths
of sequences in the TDMA and GF/RS code approaches are respectively $O(N\Delta)$ and $O(\Delta k^2 m^2)$, where $N$ is the code size.
Therefore, the PCAC is more efficient under the condition:
$$k^2 m^4\Delta > N > \frac{k^2}{\Delta}.$$

\section{Partially conflict-avoiding codes of small weight} \label{sec:PCAC23}

In this section, we investigate optimal partially conflict-avoiding codes.
The main technique is to view an optimal $\text{PCAC}_\Delta$ of length $n$ as a maximum packing of $K_n$.
By Lemma~\ref{lem:large-delta}, we only need to consider $\Delta<\lfloor\frac{n}{2}\rfloor$.

\subsection{Weight $k=2$}

Let $i,j$ be the two endpoints of an edge $e$ in $K_n$.
The \emph{difference} of $e$, denoted by $d(e)$, is defined as the smallest nonzero integer $t$ such that
$$i+t\equiv j \text{ (mod $n$) or } j+t\equiv i \text{ (mod $n$).}$$
Note that $1\leq d(e)\leq\frac{n}{2}$ for any edge $e$ in $K_n$.
Note also that in $K_n$ there are exactly $n$ edges of difference $t$ for each $1\leq t < \frac{n}{2}$, and there are exactly $\frac{n}{2}$ edges of difference $\frac{n}{2}$ provided that $n$ is even.
We say an edge $e$ is \emph{exceptional} if $d(e)=\frac{n}{2}$ and is \emph{normal} otherwise.

\begin{lemma}\label{lem:PCAC_k=2}
For $0\leq\Delta<\lfloor\frac{n}2\rfloor$, the maximum size of a $(2,\Delta)$-packing of $K_n$ is $\frac{n-1}{2}\lfloor \frac n{\Delta+1} \rfloor$ if $n$ is odd, and $(\frac{n}{2}-1)\lfloor\frac{n}{\Delta+1}\rfloor + \lfloor\frac{n}{2\Delta+2}\rfloor$ if $n$ is even.
\end{lemma}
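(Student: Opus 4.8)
The plan is to exploit the fact that for weight $k=2$ every supporting graph lives inside a single \emph{difference class}, so the packing problem decomposes completely according to edge differences. For a $2$-subset $P=\{a,b\}$ with $d(\{a,b\})=t$, the clique $C_P$ is the single edge $\{a,b\}$, and since translation preserves the difference of an edge, all of the edges $\{a+\tau,b+\tau\}$ with $0\le\tau\le\Delta$ appearing in $G_\Delta(P)$ again have difference $t$. Consequently, if two $2$-subsets have distinct differences their supporting graphs are automatically edge-disjoint, and a $(2,\Delta)$-packing of $K_n$ is nothing but an independent choice, for each difference $t$, of $2$-subsets of difference $t$ whose supporting graphs are pairwise edge-disjoint. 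Thus the maximum size is the sum, over all admissible differences $t$, of the largest number of pairwise edge-disjoint supporting graphs one can place among the edges of difference $t$.

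First I would fix a \emph{normal} difference $t$ with $1\le t<\frac n2$ and label the $n$ edges of difference $t$ by $\mathbb{Z}_n$, assigning to $\{i,i+t\}$ the label $i$; these $n$ edges are distinct and translation by $1$ cyclically permutes their labels. Under this labelling the supporting graph of $\{a,a+t\}$ occupies exactly the $\Delta+1$ consecutive labels $a,a+1,\dots,a+\Delta$, and two such supporting graphs are edge-disjoint precisely when the corresponding arcs of length $\Delta+1$ on the cycle $\mathbb{Z}_n$ are disjoint. Since $\Delta+1\le\lfloor\frac n2\rfloor<n$ each arc is proper, and the maximum number of pairwise-disjoint arcs of length $\Delta+1$ on a cycle of $n$ points is $\lfloor\frac n{\Delta+1}\rfloor$, attained by starting the arcs at $0,\Delta+1,2(\Delta+1),\dots$. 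Hence each normal difference contributes exactly $\lfloor\frac n{\Delta+1}\rfloor$.

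The only subtle point, and what I expect to be the main obstacle, is the \emph{exceptional} difference $t=\frac n2$ when $n$ is even. Here the edges $\{i,i+\frac n2\}$ and $\{i+\frac n2,i+n\}$ coincide, so there are only $\frac n2$ distinct edges, naturally labelled by $\mathbb{Z}_{n/2}$, and translation by $1$ now cyclically permutes these $\frac n2$ labels. The supporting graph of $\{a,a+\frac n2\}$ therefore occupies $\Delta+1$ consecutive labels on a cycle of length $\frac n2$ (using $\Delta+1\le\frac n2$, which holds since $\Delta<\lfloor\frac n2\rfloor=\frac n2$, so that no label repeats), and the maximum number of pairwise-disjoint arcs is $\lfloor\frac{n/2}{\Delta+1}\rfloor=\lfloor\frac n{2\Delta+2}\rfloor$. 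Recognising that the relevant cycle has length $\frac n2$ rather than $n$ is exactly what produces the extra term in the even case, and care is needed to verify the coincidence of antipodal edges and the resulting halved cycle length.

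Finally I would assemble the count. For odd $n$ the admissible differences are $1,2,\dots,\frac{n-1}2$, all normal, giving $\frac{n-1}2\lfloor\frac n{\Delta+1}\rfloor$; for even $n$ the differences $1,\dots,\frac n2-1$ are normal while $\frac n2$ is exceptional, giving $(\frac n2-1)\lfloor\frac n{\Delta+1}\rfloor+\lfloor\frac n{2\Delta+2}\rfloor$. The upper bound is immediate from the per-difference maxima established above, since edges of distinct differences are disjoint; the matching lower bound follows by taking, for every difference simultaneously, the explicit arc-packing starting at $0,\Delta+1,\dots$, whose union is a genuine $(2,\Delta)$-packing of $K_n$ of the stated size. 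As a sanity check the formulas reduce to $\binom n2$ when $\Delta=0$, as they must.
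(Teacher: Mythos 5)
Your proof is correct and follows essentially the same route as the paper's: decompose the packing by edge difference, observe that each supporting graph consists of $\Delta+1$ edges of one fixed difference, bound the number of edge-disjoint supporting graphs per difference class by $\lfloor\frac{n}{\Delta+1}\rfloor$ for normal differences and $\lfloor\frac{n/2}{\Delta+1}\rfloor$ for the exceptional difference $\frac{n}{2}$, and match this with the explicit arc construction. The only distinction is that you spell out the details (the cycle-of-arcs picture, the halved cycle length for antipodal edges, and the $\Delta=0$ sanity check) that the paper's terse proof leaves implicit.
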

\begin{proof}
Assume that $\mathcal{P}$ is a maximum packing.
For each $A\in\mathcal{P}$, the supporting graph $G_A$ is consist of $\Delta+1$ edges with the same difference $d$.
Then the difference $d$ could produce at most $\lfloor\frac{n}{\Delta+1}\rfloor$ supporting graphs if $d<\frac{n}{2}$ or at most $\lfloor\frac{n/2}{\Delta+1}\rfloor$ supporting graphs if $d=\frac{n}{2}$.
Conversely, the construction is straightforward.
Hence the result follows.
\qed
\end{proof}

Combining Lemma~\ref{lem:PCAC_k=2} and Proposition~\ref{pro:PCAC_packing} together with the fact that $M(n,2)=\lfloor\frac{n}{2}\rfloor$, we have:

\begin{theorem} Let $n,\Delta$ be integers with $0\leq\Delta<n$. Then \[ M_\Delta(n,2)=\left\{
\begin{array}{ll}
\frac{n-1}{2}\lfloor \frac n{\Delta+1} \rfloor & \mbox{if $n$ is odd and } \Delta\leq\frac{n-3}2; \\
(\frac{n}{2}-1)\lfloor\frac{n}{\Delta+1}\rfloor + \lfloor\frac{n}{2\Delta+2}\rfloor & \mbox{if $n$ is even and } \Delta\leq\frac{n-2}2; \\
\lfloor \frac n2 \rfloor & \mbox{otherwise.}
\end{array}  \right. \]
\end{theorem}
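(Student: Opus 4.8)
The plan is to assemble the statement from the two lemmas already proved, because the three cases listed correspond precisely to the two regimes $\Delta<\lfloor\frac n2\rfloor$ and $\lfloor\frac n2\rfloor\leq\Delta<n$. First I would use Proposition~\ref{pro:PCAC_packing} to identify $M_\Delta(n,2)$ with the maximum size of a $(2,\Delta)$-packing of $K_n$; these are literally the same extremal quantity under the correspondence $\mathcal{P}=\{\mathcal{I}_X:X\in\mathcal{C}\}$, so no computation is needed at this step.

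For the regime $\Delta<\lfloor\frac n2\rfloor$, Lemma~\ref{lem:PCAC_k=2} supplies the maximum packing size verbatim: it is $\frac{n-1}{2}\lfloor\frac{n}{\Delta+1}\rfloor$ when $n$ is odd and $(\frac n2-1)\lfloor\frac{n}{\Delta+1}\rfloor+\lfloor\frac{n}{2\Delta+2}\rfloor$ when $n$ is even. The only thing to verify is that the threshold $\Delta<\lfloor\frac n2\rfloor$ translates into the explicit inequalities attached to the first two cases. When $n$ is odd, $\lfloor\frac n2\rfloor=\frac{n-1}{2}$, so $\Delta<\lfloor\frac n2\rfloor$ is equivalent to $\Delta\leq\frac{n-3}{2}$; when $n$ is even, $\lfloor\frac n2\rfloor=\frac n2$, so $\Delta<\lfloor\frac n2\rfloor$ is equivalent to $\Delta\leq\frac{n-2}{2}$. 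These match the conditions in the first and second branches exactly.

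For the complementary regime $\lfloor\frac n2\rfloor\leq\Delta<n$, Lemma~\ref{lem:large-delta} gives $M_\Delta(n,2)=M(n,2)$, and combining this with the known value $M(n,2)=\lfloor\frac n2\rfloor$ recovered in the text just before the theorem yields the third (``otherwise'') branch. Since $\Delta<\lfloor\frac n2\rfloor$ and $\Delta\geq\lfloor\frac n2\rfloor$ partition the admissible range $0\leq\Delta<n$, the three branches are mutually exclusive and jointly exhaustive, so the case analysis is complete.

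There is no deep obstacle here; all the substance lives in the two lemmas. The one place demanding care is the boundary bookkeeping: I must confirm that the value $\Delta=\lfloor\frac n2\rfloor$ falls into the ``otherwise'' branch and is covered by Lemma~\ref{lem:large-delta} rather than Lemma~\ref{lem:PCAC_k=2}, and that for both parities the floor-based threshold $\lfloor\frac n2\rfloor$ lines up cleanly with the stated inequalities $\frac{n-3}{2}$ and $\frac{n-2}{2}$. Once this alignment is checked, the result follows immediately by invoking Propositions and Lemmas already in hand.
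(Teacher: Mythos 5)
Your proposal is correct and follows essentially the same route as the paper: the paper's theorem is stated as an immediate consequence of Lemma~\ref{lem:PCAC_k=2}, Proposition~\ref{pro:PCAC_packing}, the fact $M(n,2)=\lfloor\frac n2\rfloor$, and Lemma~\ref{lem:large-delta} (invoked at the start of Section~\ref{sec:PCAC23} to dispose of $\Delta\geq\lfloor\frac n2\rfloor$). Your explicit verification that $\Delta<\lfloor\frac n2\rfloor$ translates to $\Delta\leq\frac{n-3}2$ (odd $n$) and $\Delta\leq\frac{n-2}2$ (even $n$) is exactly the bookkeeping the paper leaves implicit.
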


\subsection{Weight $k=3$}

Let $A$ be a $3$-subset of $\mathbb{Z}_n$ and $\Delta$ be an integer with $0\leq\Delta < \frac n2$.
If two of the three edges in $C_A$ have the same difference, then the number of edges in $G_\Delta(A)$, denoted by $\|G_\Delta(A)\|$, can be determined by the two distinct differences.
For example, let $n=8$.
There are seven edges (four of difference $1$ and three of difference $2$) in $G_2(\{0,1,2\})$, and eight edges (five of difference $2$ and three of difference $4$) in $G_2(\{3,5,7\})$, see Figure~\ref{fig:supporting}.
We characterize this phenomenon below.

\begin{lemma} \label{lem:n=3-size}
Let $A$ be a 3-subset of $\mathbb{Z}_n$ and $\Delta$ be an integer with $0\leq\Delta < \lfloor\frac{n}2\rfloor$.
If there exist two edges in $C_A$ with the same difference $d$ such that $d\neq \frac{n}{3}$, then
$$ \|G_\Delta(A)\|=\left\{
\begin{array}{ll}
2\Delta+2+d & \text{if } d\leq\Delta, \\
3(\Delta+1)& \text{if } d>\Delta,
\end{array}\right. $$
where $\|G_\Delta(A)\|$ is the number of edges in $G_\Delta(A)$.
\end{lemma}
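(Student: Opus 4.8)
The plan is to regard $G_\Delta(A)$ as the union of three \emph{edge-orbits}, one for each edge of the triangle $C_A$, and to count its edges by inclusion--exclusion. For an edge $e$ write $O(e):=\{e+\tau:0\le\tau\le\Delta\}$ for the set of its translates; then $G_\Delta(A)=O(e_1)\cup O(e_2)\cup O(e_3)$, where $e_1,e_2,e_3$ are the three edges of $C_A$. The central observation is that translation preserves the difference of an edge, so every edge in $O(e)$ has the same difference as $e$. Consequently two orbits can share an edge only if the corresponding edges of $C_A$ have equal difference, which reduces the entire count to understanding the single pair of equal-difference edges guaranteed by the hypothesis.

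First I would pin down the geometry forced by that hypothesis. Since any two edges of a triangle share a vertex, if two edges of $C_A$ both have difference $d$ then, writing the shared vertex as $v$, the three vertices must be $v-d,\,v,\,v+d$; the two equal edges are $\{v-d,v\}$ and $\{v,v+d\}$, and the remaining edge $\{v-d,v+d\}$ has difference $d'=\min\{2d,\,n-2d\}$. One checks that $d<\lfloor\frac n2\rfloor$ automatically, since a triangle cannot carry two distinct edges of difference $\frac n2$, and that $d'\neq d$: the equality $d'=d$ would force $n=3d$, i.e. $d=\frac n3$, which is excluded. Hence $O(e_3)$ is edge-disjoint from both $O(e_1)$ and $O(e_2)$, and the only possible overlap is $O(e_1)\cap O(e_2)$.

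It then remains to count $|O(e_1)\cap O(e_2)|$. Identifying each edge of difference $d$ with the residue $i$ for which it equals $\{i,i+d\}$ (a bijection with $\mathbb{Z}_n$ because $d<\frac n2$) turns $O(e_1)$ and $O(e_2)$ into the index intervals $\{v-d,\ldots,v-d+\Delta\}$ and $\{v,\ldots,v+\Delta\}$, two blocks of length $\Delta+1$ offset by $d$. Because $\Delta<\lfloor\frac n2\rfloor$ their union spans fewer than $n$ residues, so no wrap-around occurs and the intersection has exactly $\max\{0,\Delta+1-d\}$ elements. Each individual orbit has exactly $\Delta+1$ distinct edges, using $d,d'\le\frac n2$ together with $\Delta<\frac n2$ so that even an exceptional third edge is not repeated. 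Inclusion--exclusion then yields $\|G_\Delta(A)\|=3(\Delta+1)-\max\{0,\Delta+1-d\}$, which equals $3(\Delta+1)$ when $d>\Delta$ and $2\Delta+2+d$ when $d\le\Delta$, as claimed.

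The main obstacle is the structural step: establishing that a repeated difference forces the symmetric configuration $\{v-d,v,v+d\}$ and, in tandem, isolating precisely where the two hypotheses are used — the condition $d\neq\frac n3$ to guarantee $d'\neq d$ so that the third orbit cannot collide with the other two, and the condition $\Delta<\lfloor\frac n2\rfloor$ both to exclude wrap-around in the interval intersection and to keep every orbit at its full size $\Delta+1$. Once these facts are in place, the inclusion--exclusion count is routine.
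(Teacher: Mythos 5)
Your proposal is correct and follows essentially the same route as the paper: decompose $G_\Delta(A)$ into the three translation-orbits of the triangle's edges, observe that the orbit of the third edge is disjoint from the other two precisely because $d\neq\frac n3$, and count the overlap of the two equal-difference orbits ($\Delta+1-d$ edges when $d\leq\Delta$, none when $d>\Delta$) via inclusion--exclusion. Your write-up is in fact somewhat more careful than the paper's, which simply asserts the symmetric configuration $i-j\equiv j-k\equiv d$ and leaves the full-orbit-size and wrap-around checks implicit.
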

\begin{proof}
Assume $A=\{i,j,k\}$ and $i-j\equiv j-k\equiv d$ (mod $n$).
Let $E_1=\bigcup_{\tau=0}^\Delta \{i+\tau,j+\tau\}$, $E_2=\bigcup_{\tau=0}^\Delta \{j+\tau,k+\tau\}$ and $E_3=\bigcup_{\tau=0}^\Delta \{i+\tau,k+\tau\}$ be the sets of edges in $G_\Delta(A)$.
It is easy to see that $E_1\cap E_2$ is empty if $d>\Delta$ and is equal to $\{i,j\}\cup \cdots \cup\{i+\Delta-d,j+\Delta-d\}$ if $d\leq \Delta$.
That is, there are $\Delta-d+1$ repeated edges if $d\leq \Delta$.
Since $d\neq \frac n3$, $E_1\cap E_3=\emptyset$ and $E_2\cap E_3=\emptyset$.
This completes the proof. \qed
\end{proof}

We note here that if $d=\frac{n}{3}$ in above lemma, then $\|G_\Delta(A)\|=3(\Delta+1)$ if $\frac{n}{3}>\Delta$ and $\|G_\Delta(A)\|=n$ if $\frac{n}{3}\leq\Delta$.
We have the following result.

\begin{lemma} \label{lem:k=3upper}
Given a maximum $(3,\Delta)$-packing $\mathcal{P}$ of $K_n$, where $n$ and $\Delta$ are positive integers with $\Delta<\lfloor\frac{n}2\rfloor$.
Then
\begin{equation}
|\mathcal{P}| < \frac{n(n-1)}{6(\Delta+1)} + \frac{2\ln{2}-1}{3}n + \frac{n}{3(\Delta+1)}.
\end{equation}
\end{lemma}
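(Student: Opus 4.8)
The plan is to bound $|\mathcal P|$ by a double count of the edges of $K_n$, measuring via Lemma~\ref{lem:n=3-size} how far each supporting graph falls short of its maximal size $3(\Delta+1)$. Since the supporting graphs $\{G_\Delta(P):P\in\mathcal P\}$ are pairwise edge-disjoint subgraphs of $K_n$, I would start from
\[
\sum_{P\in\mathcal P}\|G_\Delta(P)\|\;\le\;|E(K_n)|\;=\;\binom n2 .
\]
Writing $\|G_\Delta(P)\|=3(\Delta+1)-\delta_P$ with a deficit $\delta_P\ge0$, this immediately yields
\[
|\mathcal P|\;\le\;\frac{n(n-1)}{6(\Delta+1)}+\frac{1}{3(\Delta+1)}\sum_{P\in\mathcal P}\delta_P ,
\]
so the whole problem reduces to an upper estimate of the total deficit $\sum_{P}\delta_P$.

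Next I would classify the deficits. If the three edges of $C_P$ have distinct differences (or $P$ is an arithmetic progression whose common difference exceeds $\Delta$), then the three rotated edge-families lie in distinct difference classes and are internally injective because $\Delta+1\le\lfloor n/2\rfloor$, so $\|G_\Delta(P)\|=3(\Delta+1)$ and $\delta_P=0$. An exceptional edge of difference $\frac n2$ can occur at most once in a triangle and therefore never produces a repeated difference, hence never a deficit. The only positive deficits come from $3$-term arithmetic progressions whose common difference $d$ satisfies $d\le\Delta$; for such a progression with $d\neq\frac n3$, Lemma~\ref{lem:n=3-size} gives $\|G_\Delta(P)\|=2\Delta+2+d$, i.e.\ $\delta_P=\Delta+1-d$, while the degenerate equilateral case $d=\frac n3$ is governed by the remark following that lemma.

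I would then count the progressions of each common difference. For $1\le d\le\Delta$ let $a_d$ be the number of members of $\mathcal P$ that are progressions of common difference $d$. By the analysis of $E_1\cup E_2$ in the proof of Lemma~\ref{lem:n=3-size}, each such member occupies exactly $\Delta+1+d$ of the $n$ edges of difference $d$, and since $\mathcal P$ is a packing these edge sets are disjoint; hence $a_d(\Delta+1+d)\le n$, that is $a_d\le n/(\Delta+1+d)$. Summing the deficits over all non-degenerate progressions gives
\[
\sum_{P\in\mathcal P}\delta_P\;\le\;\sum_{d=1}^{\Delta}\frac{n(\Delta+1-d)}{\Delta+1+d}.
\]
Setting $m=\Delta+1$ and using $\frac{m-d}{m+d}=-1+\frac{2m}{m+d}$, the right-hand sum reduces to a harmonic sum,
\[
\sum_{d=1}^{m-1}\frac{m-d}{m+d}=-(m-1)+2m\sum_{j=m+1}^{2m-1}\frac1j .
\]

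Finally I would invoke the integral comparison $\sum_{j=m+1}^{2m-1}\frac1j<\int_m^{2m-1}\frac{dx}{x}=\ln\frac{2m-1}{m}<\ln2$, valid because $1/x$ is decreasing, to obtain the strict bound $\sum_{d=1}^{m-1}\frac{m-d}{m+d}<m(2\ln2-1)+1$; substituting $m=\Delta+1$ back into the edge-count inequality produces exactly the three claimed terms. The step I expect to be the main obstacle is the degenerate equilateral case $d=\frac n3$, whose actual deficit $3(\Delta+1)-n$ exceeds the generic value $\Delta+1-d$ by up to about $\frac n3$; here I would argue that such a member uses \emph{all} $n$ edges of difference $\frac n3$, so at most one occurs, and that its bounded extra deficit is comfortably absorbed by the slack of order $m\bigl(\ln2-\sum_{j=m+1}^{2m-1}\tfrac1j\bigr)$ that the strict harmonic inequality above leaves in reserve.
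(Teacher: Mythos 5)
Your proposal is correct and follows essentially the same route as the paper's proof: the same double count of edges of $K_n$, the same key observation (via Lemma~\ref{lem:n=3-size}) that only $3$-term progressions of common difference $d\le\Delta$ fall short of $3(\Delta+1)$ edges, the same per-difference bound $a_d\le n/(\Delta+1+d)$ (the paper's $t_d$), and the same estimate $\sum_{d=1}^{\Delta}\frac{\Delta+1-d}{\Delta+1+d}\le 2(\Delta+1)\ln 2-\Delta$, which you reach through a harmonic sum and the integral of $1/x$ rather than the paper's direct integral comparison of the summand. The only substantive difference is that you explicitly absorb the degenerate case $d=n/3$ (extra deficit at most $3(\Delta+1)-n\le n/2$, against slack at least $n$ from the strict integral inequality), whereas the paper simply restricts to $3\nmid n$ and asserts that the case $3\mid n$ can be dealt with in the same way.
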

\begin{proof}
We only consider $3\nmid n$ because the case $3|n$ can be dealt with in the same way.
For $d=1,\ldots,\Delta$, let $T_d\subset\mathcal{P}$ be the collection of 3-subsets $A$ such that in $C_A$, some two edges are of the same difference $d$.
The cardinality of $T_d$ is denoted by $t_d$.
By Lemma~\ref{lem:n=3-size}, each $T_d$ corresponds to $(2\Delta+2+d)t_d$ edges and each of the remaining 3-subsets (not in some $T_d$) corresponds to $3(\Delta+1)$ edges.
Furthermore, every $G_{\Delta}(A)$ for $A\in T_d$ contains exactly $\Delta+d+1$ edges with difference $d$, so $t_d\leq\frac{n}{\Delta+d+1}$.
Then,
\begin{align*}
M &\leq ~t_1+t_2+\cdots+t_\Delta + \frac{{n\choose 2}-((2\Delta+3)t_1+(2\Delta+4)t_2+\cdots+(3\Delta+2)t_\Delta)}{3(\Delta+1)}\\
&= ~\frac{n(n-1)}{6(\Delta+1)} + \frac{\Delta t_1+(\Delta-1)t_2+\cdots +t_\Delta}{3(\Delta+1)}\\
&\leq ~\frac{n(n-1)}{6(\Delta+1)} + \frac{n}{3(\Delta+1)} \sum_{d=1}^\Delta \frac{\Delta+1-d}{\Delta+1+d}.
\end{align*}
Consider the last summation, we have
\begin{align*}
\sum_{d=1}^\Delta \frac{\Delta+1-d}{\Delta+1+d} \leq& ~\int_0^\Delta \left(\frac{\Delta+1-x}{\Delta+1+x}\right) \text{d}x = \int_0^\Delta \left(\frac{2(\Delta+1)}{\Delta+1+x}-1\right) \text{d}x \\
=& ~2(\Delta+1) \ln(\frac{2\Delta+1}{\Delta+1})-\Delta \leq 2(\Delta+1) \ln2 -\Delta,
\end{align*}
and thus the result follows. \qed
\end{proof}

The following result on difference triangle sets can be constructed from \emph{Skolem sequences}~\cite{Skolem_1957} and \emph{hooked Skolem sequences}~\cite{OKeefe_1961}.

\begin{theorem}\cite{OKeefe_1961,Skolem_1957} \label{thm:3-DF}
There exists a $(r,2)$-DTS with scope $3r$ whenever $r\equiv 0,1$ (mod 4), and scope $3r+1$ whenever $r\equiv 2,3$ (mod 4).
\end{theorem}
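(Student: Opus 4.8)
The plan is to construct the required difference triangle sets directly from Skolem and hooked Skolem sequences, which is exactly the route the citation suggests. Recall that a \emph{Skolem sequence} of order $r$ may be described as a partition of $\{1,2,\ldots,2r\}$ into $r$ pairs $\{a_i,b_i\}$ with $b_i-a_i=i$ for each $i=1,\ldots,r$; Skolem's theorem asserts that such a sequence exists precisely when $r\equiv 0,1\pmod 4$. A \emph{hooked Skolem sequence} of order $r$ is a partition of $\{1,2,\ldots,2r-1,2r+1\}$ (that is, $\{1,\ldots,2r+1\}$ with the element $2r$ removed) into $r$ pairs $\{a_i,b_i\}$ with $b_i-a_i=i$, and O'Keefe's theorem asserts that these exist precisely when $r\equiv 2,3\pmod 4$. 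I would take the existence of these two families as the given input.

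Given either type of sequence, the block construction is the same: for each $i=1,\ldots,r$ set $B_i=\{0,\,r+a_i,\,r+b_i\}$. Since $1\le a_i<b_i$, each block satisfies $0<r+a_i<r+b_i$, so the family is normalized in the sense of Definition~\ref{defi:DTS}. The heart of the verification is to show that the $3r$ differences arising from these blocks are all distinct. Each $B_i$ produces three differences: the two ``outer'' differences $r+a_i$ and $r+b_i$, and the single ``inner'' difference $(r+b_i)-(r+a_i)=b_i-a_i=i$.

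For the Skolem case ($r\equiv 0,1\pmod 4$) the inner differences range over $\{1,2,\ldots,r\}$ as $i$ runs from $1$ to $r$, while the outer differences are $\{r+j:\,j\in\{a_i,b_i\},\,1\le i\le r\}=\{r+1,r+2,\ldots,3r\}$, because the pairs partition $\{1,\ldots,2r\}$. Hence the full difference collection is exactly $\{1,2,\ldots,3r\}$, which is $3r$ distinct values, and the largest element appearing in any block is $r+\max_i b_i=3r$, giving scope $3r$. For the hooked case ($r\equiv 2,3\pmod 4$) the inner differences are again $\{1,\ldots,r\}$, but now the pairs partition $\{1,\ldots,2r+1\}\setminus\{2r\}$, so the outer differences become $\{r+1,\ldots,3r-1,3r+1\}$; together this is $\{1,\ldots,3r+1\}\setminus\{3r\}$, still $3r$ distinct values, with largest block element $r+(2r+1)=3r+1$ and hence scope $3r+1$.

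I expect the only genuinely hard content to be the existence of Skolem and hooked Skolem sequences themselves, i.e. Skolem's and O'Keefe's theorems, whose proofs require explicit, case-dependent constructions of the sequences according to $r$ modulo $4$. Once those are granted, the remainder is the routine bookkeeping above: checking normalization and confirming that the inner and (shifted) outer differences are disjoint and exhaust the claimed range. The one point worth flagging is that no inner difference can coincide with an outer difference, which holds automatically here because every outer difference exceeds $r$ while every inner difference is at most $r$; this observation is what makes the two blocks of differences combine into a single set of $3r$ distinct values.
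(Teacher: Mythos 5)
Your proposal is correct and follows exactly the route the paper intends: the paper offers no proof of Theorem~\ref{thm:3-DF}, merely citing Skolem and O'Keefe and noting the result ``can be constructed from Skolem sequences and hooked Skolem sequences,'' which is precisely your construction $B_i=\{0,\,r+a_i,\,r+b_i\}$ with its clean separation of inner differences $\{1,\ldots,r\}$ from shifted outer differences. The verification of distinctness, the scope computations ($3r$ and $3r+1$), and the reliance on Skolem's and O'Keefe's existence theorems as the only nontrivial input are all exactly as they should be.
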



By Theorem~\ref{thm:DDS_DTS}, there exists an $(n,3,r)$-DDS for all $n\geq 6r+1$ whenever $r\equiv 0,1$ (mod 4), and $n\geq 6r+3$ whenever $r\equiv 2,3$ (mod 4).
Applying Proposition~\ref{pro:DDS_packing} we obtain the following result.

\begin{lemma} \label{lem:k=3lower}
Let $n,\Delta$ be positive integers such that $\Delta<\lfloor\frac{n}2\rfloor$.
There exists a $(3,\Delta)$-packing $\mathcal{P}$ of $K_n$ with 
$$|\mathcal{P}|=\left\lfloor\frac{n-1}6\right\rfloor \left\lfloor\frac{n}{\Delta+1}\right\rfloor.$$
\end{lemma}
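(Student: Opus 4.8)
The plan is to reduce everything to the existence of a single disjoint difference set and then let Proposition~\ref{pro:DDS_packing} generate the packing. Since that proposition turns an $(n,3,r)$-DDS into a $(3,\Delta)$-packing of $K_n$ of size $r\lfloor n/(\Delta+1)\rfloor$ for every $\Delta$ with $0\le\Delta<n$, it is enough to exhibit an $(n,3,r)$-DDS with $r=\lfloor(n-1)/6\rfloor$: substituting this value immediately gives the target size $\lfloor(n-1)/6\rfloor\lfloor n/(\Delta+1)\rfloor$, and the hypothesis $\Delta<\lfloor n/2\rfloor$ is not even needed for this step. The choice $r=\lfloor(n-1)/6\rfloor$ is forced, because the counting condition \eqref{eq:DDS_necessary} shows that no $(n,3,r)$-DDS can have more than $\lfloor(n-1)/6\rfloor$ base blocks; thus the lemma amounts to saying that this trivial upper bound on the number of blocks is attained.

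To manufacture the DDS I would run the Skolem route. Put $r=\lfloor(n-1)/6\rfloor$. Theorem~\ref{thm:3-DF} supplies an $(r,2)$-DTS of scope $3r$ when $r\equiv 0,1\pmod 4$ and of scope $3r+1$ when $r\equiv 2,3\pmod 4$, and Theorem~\ref{thm:DDS_DTS} upgrades an $(r,2)$-DTS of scope $m$ into an $(n,3,r)$-DDS for all $n\ge 2m+1$. I would then split on $r\bmod 4$. If $r\equiv 0,1\pmod 4$ the scope is $3r$, so an $(n,3,r)$-DDS exists whenever $n\ge 6r+1$; but $r=\lfloor(n-1)/6\rfloor$ already forces $n\ge 6r+1$, and the construction closes without further ado. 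If $r\equiv 2,3\pmod 4$ the scope is $3r+1$, and the same reasoning succeeds as long as $n\ge 6r+3$, that is, for $n\equiv 0,3,4,5\pmod 6$.

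The hard part is the remaining boundary, $n\equiv 1,2\pmod 6$ with $r=\lfloor(n-1)/6\rfloor\equiv 2,3\pmod 4$; here $n\in\{6r+1,6r+2\}$ lies below the threshold $6r+3$ demanded by Theorem~\ref{thm:DDS_DTS}, and dropping to $r-1$ blocks is not an option because it would lose an entire factor $\lfloor n/(\Delta+1)\rfloor$ and destroy the claimed equality. The DTS-scope estimate of Theorem~\ref{thm:DDS_DTS} is wasteful precisely because it forbids any modular wraparound, so for these residues I would replace it by a direct construction of a maximum three-block difference packing of $\mathbb{Z}_n$: when $n\equiv 1\pmod 6$ the object required is exactly a cyclic $(n,3)$-difference family, whose existence for every such $n$ is classical, while when $n\equiv 2\pmod 6$ one needs a near-perfect packing that leaves a single difference unused, built from hooked and near-Skolem variants. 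This last family is the genuinely delicate case, and it is also where one must be mindful of the few small lengths at which the naive bound is not met; verifying that the base blocks so produced have pairwise distinct internal differences modulo $n$, and that the translate scheme of Proposition~\ref{pro:DDS_packing} keeps the supporting graphs edge-disjoint, is then routine.
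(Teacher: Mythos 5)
Your skeleton coincides with the paper's own proof: the authors obtain Lemma~\ref{lem:k=3lower} exactly by taking $r=\lfloor(n-1)/6\rfloor$, feeding Theorem~\ref{thm:3-DF} into Theorem~\ref{thm:DDS_DTS} to get an $(n,3,r)$-DDS, and applying Proposition~\ref{pro:DDS_packing}. You are also right --- and more careful than the paper, which passes over this point silently --- that the DTS route needs $n\geq 6r+1$ when $r\equiv 0,1\pmod 4$ but $n\geq 6r+3$ when $r\equiv 2,3\pmod 4$, so the residues $n=6r+1$ and $n=6r+2$ are left uncovered in the latter case. Your repair of the first residue is correct: a $(6r+1,3,r)$-DDS is a cyclic difference family, and these exist for every $r$ (Peltesohn's theorem; equivalently, the hooked construction already works inside $\mathbb{Z}_{6r+1}$ because the hook difference satisfies $3r+1\equiv -3r$ there, so no scope bound is needed). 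The paper's own example of a $(19,3,3)$-DDS, with $r=3\equiv 3\pmod 4$ and $n=6r+1$, is exactly such a case.

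The genuine gap is the other residue class. For $n=6r+2$ with $r\equiv 2,3\pmod 4$, the object you propose to build --- an $(n,3,r)$-DDS, i.e.\ $r$ base blocks whose differences miss exactly one nonzero element --- does not exist at all, and not merely at ``a few small lengths.'' The set of differences realized by a DDS is closed under negation, so the unique missed element must be the self-paired one, $n/2=3r+1$; hence the blocks realize, as representatives in $\{1,\ldots,3r\}$, each of $1,2,\ldots,3r$ exactly once. But the three representatives $d_1,d_2,d_3$ of any admissible $3$-subset of $\mathbb{Z}_n$ satisfy either $d_i+d_j=d_k$ or $d_1+d_2+d_3=n$, so every block contributes an even sum; this forces $1+2+\cdots+3r=3r(3r+1)/2$ to be even, whereas it is odd precisely when $r\equiv 2,3\pmod 4$. (This parity obstruction is the very reason hooked Skolem sequences need a hook.) Consequently, for all $n\equiv 14,20\pmod{24}$ no DDS-based argument --- yours or the paper's --- can reach the stated size; the packing would have to be constructed directly, with translate windows of different base blocks interleaved rather than generated from a common DDS. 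Such packings can exist: for $n=14$, $\Delta=6$ the four sets $\{0,1,3\}$, $\{3,7,8\}$, $\{8,10,0\}$, $\{7,10,1\}$ form a $(3,6)$-packing of $K_{14}$ of the required size $4$, and for $\Delta=0$ a maximum triangle packing (leave a perfect matching) has exactly $\lfloor\frac{n-1}{6}\rfloor n$ triangles; but producing them for all such $n$ and all $\Delta<\lfloor\frac n2\rfloor$ is a genuine construction problem, not a routine verification. To be fair, the paper's proof has the identical hole: your proposal matches it wherever it is valid, repairs $n\equiv 1\pmod 6$, and breaks down exactly where the paper does, on $n\equiv 2\pmod 6$ with $\lfloor\frac{n-1}{6}\rfloor\equiv 2,3\pmod 4$.
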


The following result can be obtained by Proposition~\ref{pro:PCAC_packing} together with Lemma~\ref{lem:k=3upper} and \ref{lem:k=3lower}.

\begin{theorem}
Let $n,\Delta$ be positive integers such that $\Delta<\lfloor\frac{n}2\rfloor$.
Then 
$$\left\lfloor\frac{n-1}6\right\rfloor \left\lfloor\frac{n}{\Delta+1}\right\rfloor \leq M_\Delta(n,3)\leq \frac{n(n-1)}{6(\Delta+1)} + \frac{2\ln{2}-1}{3}n + \frac{n}{3(\Delta+1)}.$$
\end{theorem}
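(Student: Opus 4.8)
The plan is to reduce the theorem entirely to the packing formulation and then quote the two bounds already established. By Proposition~\ref{pro:PCAC_packing}, codes in $\text{PCAC}_\Delta(n,3)$ are in size-preserving bijection with $(3,\Delta)$-packings of $K_n$; consequently $M_\Delta(n,3)$ equals the maximum size of such a packing. Thus it suffices to bound the cardinality of a maximum $(3,\Delta)$-packing from both sides, and the hypothesis $\Delta<\lfloor n/2\rfloor$ is exactly the regime in which Lemmas~\ref{lem:k=3upper} and~\ref{lem:k=3lower} apply.

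For the upper bound I would invoke Lemma~\ref{lem:k=3upper} directly: any maximum $(3,\Delta)$-packing $\mathcal{P}$ satisfies $|\mathcal{P}| < \frac{n(n-1)}{6(\Delta+1)} + \frac{2\ln 2 -1}{3}n + \frac{n}{3(\Delta+1)}$, which is precisely the claimed right-hand side. Transporting this through the bijection of Proposition~\ref{pro:PCAC_packing} gives the upper estimate for $M_\Delta(n,3)$. For the lower bound I would invoke Lemma~\ref{lem:k=3lower}, which produces an explicit $(3,\Delta)$-packing of $K_n$ of size $\lfloor\frac{n-1}{6}\rfloor\lfloor\frac{n}{\Delta+1}\rfloor$ (built from an $(n,3,r)$-DDS obtained via the Skolem/hooked-Skolem difference triangle sets of Theorem~\ref{thm:3-DF} and the packing construction of Proposition~\ref{pro:DDS_packing}). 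Since this packing corresponds to a $\text{PCAC}_\Delta(n,3)$ of the same size, $M_\Delta(n,3)$ is at least $\lfloor\frac{n-1}{6}\rfloor\lfloor\frac{n}{\Delta+1}\rfloor$. Combining the two inequalities yields the stated sandwich.

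The honest remark is that, granting the preceding lemmas, the theorem is an immediate corollary and presents no real obstacle; all the substance sits inside Lemma~\ref{lem:k=3upper}. There the genuine difficulty---were I to reprove it---would be the edge-counting argument that partitions the blocks according to which repeated difference $d$ they carry, bounds $t_d \leq n/(\Delta+d+1)$ using the $\Delta+d+1$ edges of difference $d$ that each such block contributes, and then controls the resulting sum $\sum_{d=1}^\Delta \frac{\Delta+1-d}{\Delta+1+d}$ by the integral $\int_0^\Delta \frac{\Delta+1-x}{\Delta+1+x}\,dx = 2(\Delta+1)\ln\frac{2\Delta+1}{\Delta+1}-\Delta$ together with the estimate $\ln\frac{2\Delta+1}{\Delta+1}\leq\ln 2$. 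For the theorem as stated, however, no such argument is needed beyond citing the lemma and passing through Proposition~\ref{pro:PCAC_packing}.
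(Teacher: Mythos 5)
Your proposal is correct and coincides with the paper's own proof, which likewise obtains the theorem immediately by combining Proposition~\ref{pro:PCAC_packing} with Lemma~\ref{lem:k=3upper} (upper bound) and Lemma~\ref{lem:k=3lower} (lower bound). Your side remarks about where the substantive work lies, namely the edge-counting and integral estimate inside Lemma~\ref{lem:k=3upper} and the Skolem-based construction behind Lemma~\ref{lem:k=3lower}, accurately reflect the paper's development as well.
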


Table~\ref{tab:bounds_PCAC3} lists some upper and lower bounds of $M_{\Delta}(n,3)$ for $\Delta=\sqrt{n}$, where $n=200t$ for $t=1,2,\ldots,18$.
One can imagine that the larger the value $n$, the smaller the gap between the two bounds with respect to $n$.
Generally speaking, if $\Delta$ is fixed (a constant or a function of $n$), then the code size obtained by disjoint difference sets approximately attains the theoretical upper bound $O(\frac{n^2}{6\Delta})$ as $n\rightarrow \infty$.

\begin{table}[h]
\setlength{\belowcaptionskip}{0pt}
\scriptsize
\begin{tabular}{|c||c|c|c|c|c|c|c|c|c|}
\hline
$n$         & $200$ & $400$ & $600$ & $800$ & $1000$ & $1200$ & $1400$ & $1600$ & $1800$ \\ \hline \hline
Upper bound & $442$ & $1273$  & $2357$ & $3647$ & $5114$ & $6739$ & $8509$ & $10413$ & $12441$ \\ \hline
Lower bound & $429$ & $1254$  & $2277$  & $3591$ & $4980$ & $6567$ & $8388$ & $10374$ & $12259$ \\ \hline \hline
$n$         & $2000$ & $2200$ & $2400$ & $2600$  & $2800$ & $3000$ & $3200$ & $3400$ & $3600$ \\ \hline \hline
Upper bound & $14588$ & $16846$ & $19212$ & $21679$ & $24244$ & $26904$ & $29655$ & $32494$ & $35419$ \\ \hline
Lower bound & $14319$ & $16470$ & $19152$ & $21650$ & $23766$ & $26447$ & $29315$ & $32262$ & $35341$ \\ \hline
\end{tabular}
\caption{Upper and lower bounds on $M_{\sqrt{n}}(n,3)$}
\label{tab:bounds_PCAC3}
\normalsize
\end{table}

\subsection{Weight $k=4,5,6,7$}

Here are some difference family results on $k=4,5,6,7$.

\begin{theorem}\cite{Chen_Zhu_1999,Chen_Zhu_1998,Chen_Zhu_2002} \label{thm:DF_4567}

\begin{enumerate}[(i)]
\item For any prime $p\equiv 1$ (mod $12$) there exists a $(p,4)$-DF.
\item For any prime $p\equiv 1$ (mod $20$) there exists a $(p,5)$-DF.
\item For any prime $p\equiv 1$ (mod $30$) there exists a $(p,6)$-DF with one exception of $p=61$.
\item Let $p\equiv 1$ (mod $42$) be a prime and $p\neq 43,127,211$. Then there exists a $(p,7)$-DF whenever $(-3)^{\frac{p-1}{14}}\neq 1$ in $\mathbb{Z}_p$ or $p<261239791$ or $p>1.236597\times 10^{13}$.
\end{enumerate}
\end{theorem}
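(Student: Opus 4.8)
The statement to establish is an existence result: for each $k\in\{4,5,6,7\}$ and each prime $p$ in the stated congruence class (with the listed exceptions) one must exhibit a $(p,k)$-DF, that is, a family of $r=\frac{p-1}{k(k-1)}$ base blocks of size $k$ whose pairwise differences cover every nonzero residue of $\mathbb{Z}_p$ exactly once. The first thing I would record is that the four moduli $12,20,30,42$ are exactly $k(k-1)$, so the congruences $p\equiv 1\pmod{k(k-1)}$ are forced by the counting equality in \eqref{eq:DDS_necessary}; the content is therefore the \emph{sufficiency} of these congruences. Since $p$ is prime, $\mathbb{Z}_p^{*}$ is cyclic of order $p-1$ and $k(k-1)\mid(p-1)$, so the natural tool is cyclotomy: fix a primitive root $g$, and partition $\mathbb{Z}_p^{*}$ into cyclotomic classes of a suitable order $e$ that is a multiple of $k-1$ dividing $p-1$.

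The plan is to realize these families as \emph{radical difference families}. Because $(k-1)\mid(p-1)$, the group $U$ of $(k-1)$-th roots of unity in $\mathbb{Z}_p^{*}$ has order $k-1$, and I would take each base block in the form $\{0\}\cup\beta_i U$ for carefully chosen multipliers $\beta_i$, one per block. Such a block has size $k$, and its $k(k-1)$ ordered internal differences split into the $2(k-1)$ \emph{spoke} differences $\pm\beta_i\zeta^{j}$ arising from the element $0$ and the $(k-1)(k-2)$ \emph{chord} differences $\beta_i(\zeta^{a}-\zeta^{b})$ among the roots of unity. The whole problem then reduces to choosing the $\beta_i$, as $i$ ranges over the $r$ blocks, so that this collection of differences meets each cyclotomic class the correct number of times and hence covers $\mathbb{Z}_p^{*}$ exactly once. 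I would phrase each \emph{each class hit once} requirement as an equation in the cyclotomic numbers of order $e$, then substitute the classical closed forms for those numbers (in terms of the representation of $p$ by the relevant quadratic form) to convert existence into an explicit arithmetic condition on $p$.

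For $k=4$ and $k=5$ I expect the resulting conditions to hold for \emph{every} prime in the congruence class, yielding the unconditional statements (i) and (ii); the only subtlety is checking that the chord differences $\zeta^{a}-\zeta^{b}$ themselves fall in distinct classes, which for small $k$ can be read off the minimal polynomials of the roots of unity. The hard part will be $k=7$. Here the chords produced by the $6$th roots of unity force an auxiliary quantity involving $-3$ (coming from $(\zeta-\zeta^{2})^{2}=-3$ for a primitive cube root), and I anticipate the requirement $(-3)^{(p-1)/14}\neq 1$ in $\mathbb{Z}_p$ to be precisely the condition that $-3$ is \emph{not} a $14$th power, which keeps two otherwise-coincident chord differences in separate cyclotomic classes; when $(-3)^{(p-1)/14}=1$ the radical construction degenerates and a secondary argument is needed. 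I expect those residual primes to be closed either below the explicit bound $p<261239791$ by direct computation or above $p>1.236597\times10^{13}$ by a Weil-type character-sum estimate guaranteeing a valid choice of multipliers for all large $p$, with the genuine small exceptions $p=43,127,211$ listed separately; the single exception $p=61$ for $k=6$ would be confirmed in the same spirit, by showing the order-$30$ cyclotomic equations admit no admissible solution at that one prime. The main obstacle throughout is the cyclotomic-number bookkeeping for $k=6,7$, and for $k=7$ specifically the character-sum estimate needed to settle all sufficiently large admissible primes so that only a finite range is left for machine verification.
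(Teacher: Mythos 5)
First, be clear about what you are comparing against: the paper does not prove Theorem~\ref{thm:DF_4567} at all. It is imported verbatim, with citations, from the three papers of Chen and Zhu \cite{Chen_Zhu_1998,Chen_Zhu_1999,Chen_Zhu_2002}, so the only meaningful benchmark is the proofs in that cited literature. Measured against those, your outline does identify the actual method: the moduli $12,20,30,42$ are indeed $k(k-1)$ as forced by \eqref{eq:DDS_necessary}; the constructions are cyclotomic (Buratti-style radical difference families built from cosets of root-of-unity subgroups); admissibility of the multipliers is encoded in cyclotomic classes and cyclotomic numbers; large primes are handled by a Weil character-sum estimate and the remaining finite range by computer search. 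In particular, you correctly decode the three alternatives in (iv): $(-3)^{\frac{p-1}{14}}\neq 1$ is the condition under which the radical construction succeeds, $p<261239791$ is the computationally verified range, $p>1.236597\times 10^{13}$ is the Weil range, and the strip in between (when $-3$ is a $14$th power) is genuinely left unresolved by the theorem.

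Nevertheless, as a proof the proposal has concrete gaps. The most serious one is your claim that for $k=4,5$ the radical construction works for \emph{every} prime $p\equiv 1 \pmod{12}$, resp.\ $p\equiv 1\pmod{20}$. This is false: a block of the form $\{0\}\cup\beta U$ (or a coset $\beta U$) covers a fixed collection of cosets of a root-of-unity subgroup, and forcing these collections, over all blocks, to partition $\mathbb{Z}_p^{*}$ imposes additional arithmetic conditions on $p$ (conditions on where elements such as $1-\omega$ and $-1$ sit modulo the relevant power residues). Buratti characterized these conditions, and they fail for infinitely many admissible primes; this is exactly why Chen and Zhu needed the Weil bound plus large-scale computation for $k=4,5$ as well, so parts (i) and (ii) are no more ``unconditional via radical blocks'' than (iii) and (iv) are. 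Second, the exception $p=61$ in (iii) is a \emph{nonexistence} statement about all $(61,6)$-DFs, so it cannot be settled by showing that the order-$30$ cyclotomic equations have no admissible solution; it requires an exhaustive search over arbitrary families, not just cyclotomic ones. Finally, even where the plan points in the right direction, it remains a plan: no cyclotomic-number identity is verified, no character-sum bound is derived, and no computation is carried out, while each of (i)--(iv) corresponds to a full paper's worth of precisely that work. The proposal should therefore be read as a correct identification of the method of the cited works rather than as a proof of the statement.
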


Since an $(n,k)$-DF is an $(n,k,\frac{n-1}{k(k-1)})$-DDS, the corresponding $\text{PCAC}_\Delta$s are obtained directly by Proposition~\ref{pro:PCAC_packing} and \ref{pro:DDS_packing}.
In Table~\ref{tab:PCAC_4567} we consider $\Delta=\sqrt{n}$ and list some examples of small $n$ which satisfy conditions in Theorem~\ref{thm:DF_4567}.
We note here that more $\text{PCAC}_\Delta$s, especially for small weights, can be produced by a recursive construction of DTSs with minimum scope \cite{Chu_Colbourn_Golomb_2005}.

\begin{table}[h]
\setlength{\belowcaptionskip}{0pt}
\scriptsize
\begin{tabular}{|c||c|c|c|c|c|c|c|c|c|c|c|c|}
\hline
$n$ & $13$ & $37$ & $61$ & $73$ & $97$ & $109$ & $157$ & $181$ & $193$ & $229$ & $241$ & $277$ \\ \hline
$M_{\sqrt{n}}(n,4)$ & $2$ & $15$ & $30$ & $42$ & $64$ & $81$ & $143$ & $180$ & $192$ & $266$ & $280$ & $345$ \\ \hline \hline
$n$ & $41$ & $61$ & $101$ & $181$ & $241$ & $281$ & $401$ & $421$ & $461$ & $521$ & $541$ & $601$ \\ \hline
$M_{\sqrt{n}}(n,5)$ & $10$ & $18$ & $45$ & $108$ & $168$ & $210$ & $380$ & $399$ & $460$ & $546$ & $594$ & $690$ \\ \hline \hline
$n$ & $31$ & $151$ & $181$ & $211$ & $241$ & $271$ & $331$ & $421$ & $541$ & $571$ & $601$ & $631$ \\ \hline
$M_{\sqrt{n}}(n,6)$ & $4$ & $55$ & $72$ & $91$ & $112$ & $135$ & $187$ & $266$ & $396$ & $418$ & $460$ & $504$ \\ \hline \hline
$n$ & $337$ & $379$ & $421$ & $463$ & $547$ & $631$ & $673$ & $757$ & $883$ & $967$ & $1009$ & $1051$ \\ \hline
$M_{\sqrt{n}}(n,7)$ & $136$ & $162$ & $190$ & $220$ & $286$ & $360$ & $384$ & $468$ & $588$ & $690$ & $720$ & $775$ \\ \hline
\end{tabular}
\caption{Some lower bounds on $M_{\sqrt{n}}(n,k)$ for $k=4,5,6,7$}
\label{tab:PCAC_4567}
\normalsize
\end{table}

\section{Concluding remarks}\label{sec:conclusion}
In this paper we construct an infinite number of new partially UI sequence sets by means of $\text{PCAC}_\Delta$ or disjoint difference sets.
For some particular $n$, we are able to obtain an asymptotically optimal $\text{PCAC}_\Delta$ of length $n$ and weight three.


\end{document}